\newcommand{\eps}{\varepsilon}
\newcommand{\Eu}[1]{\ensuremath{\EuScript{#1}}}
\newcommand{\bl}[1]{\ensuremath{\mathbb{#1}}}
\newcommand{\E}{\textbf{\textsf{E}}}
\renewcommand{\Pr}{\textbf{\textsf{Pr}}}
\newcommand{\disc}{\textsf{disc}}
\newcommand{\cert}{\textsf{cert}}
\newcommand{\RR}{\textsf{RR}\xspace}
\newcommand{\RC}{\textsf{RC}\xspace}
\newcommand{\RE}{\textsf{RE}\xspace}
\newcommand{\RQ}{\textsf{RQ}\xspace}
\newcommand{\REd}{\RE\text{-}\disc}
\newcommand{\denselist}{\vspace{-.1in} \itemsep -2pt\parsep=-1pt\partopsep -2pt}
\newcommand{\Paragraph}[1]{\paragraph{\sffamily\textbf{#1.}}}
\title{Range Counting Coresets for Uncertain Data} 
\author{Amirali Abdullah \\ {\small University of Utah} \\ {\small \texttt{amirali@cs.utah.edu}}
\and 
Samira Daruki \\ {\small University of Utah} \\ {\small \texttt{daruki@cs.utah.edu}}
\and 
Jeff M. Phillips \\ {\small University of Utah} \\ {\small \texttt{jeffp@cs.utah.edu}}}
\begin{document}
\maketitle

\begin{abstract}
We study coresets for various types of range counting queries on uncertain data.  In our model each uncertain point has a probability density describing its location, sometimes defined as $k$ distinct locations.  Our goal is to construct a subset of the uncertain points, including their locational uncertainty, so that range counting queries can be answered by just examining this subset.  
We study three distinct types of queries.
\RE queries return the expected number of points in a query range.  
\RC queries return the number of points in the range with probability at least a threshold.
\RQ queries returns the probability that fewer than some threshold fraction of the points are in the range.
In both \RC and \RQ coresets the threshold is provided as part of the query.  
And for each type of query we provide coreset constructions with approximation-size tradeoffs.  
We show that random sampling can be used to construct each type of coreset, and we also provide significantly improved bounds using discrepancy-based approaches on axis-aligned range queries.  
\end{abstract}

\keywords{uncertain data, coresets, discrepancy}

\section{Introduction}
A powerful notion in computational geometry is the \emph{coreset}~\cite{AHV04,AHV07,BC03,VC71}.  Given a large data set $P$ and a family of queries $\Eu{A}$, then an \emph{$\eta$-coreset} is a subset $S \subset P$ such that for all $r \in \Eu{A}$ that $\|r(P) - r(S)\| \leq \eta$ (note the notion of distance $\|  \cdot \|$ between query results is problem specific and is intentionally left ambiguous for now).  Initially used for smallest enclosing ball queries~\cite{BC03} and perhaps most famous in geometry for extent queries as $\eta$-kernels~\cite{AHV04,AHV07}, the coreset is now employed in many other problems such as clustering~\cite{BHP02} and density estimation~\cite{VC71}.  
Techniques for constructing coresets are becoming more relevant in the era of big data; they summarize a large data set $P$ with a proxy set $S$ of potentially much smaller size that can guarantee error for certain classes of queries.  They also shed light onto the limits of how much information can possibly be represented in a small set of data.  

In this paper we focus on a specific type of coreset called an $\eta$-sample~\cite{VC71,peled,CM96} that can be thought of as preserving density queries and that has deep ties to the basis of learning theory~\cite{AB99}.  Given a set of objects $X$ (often $X \subset \bl{R}^d$ is a point set) and a family of subsets $\Eu{A}$ of $X$, then the pair $(X,\Eu{A})$ is called an \emph{range space}.  Often $\Eu{A}$ are specified by containment in geometric shapes, for instance as all subsets of $X$ defined by inclusion in any ball, any half space, or any axis-aligned rectangle.  Now an $\eta$-sample of $(X,\Eu{A})$ is a single subset $S \subset X$ such that 
\[
\max_{r \in \Eu{A}} \left| \frac{|X \cap r|}{|X|} - \frac{|S \cap r|}{|S|} \right| \leq \eta.
\]
For any query range $r \in \Eu{A}$, subset $S$ approximates the relative density of $X$ in $r$ with error at most $\eta$.  

\Paragraph{Uncertain points}
Another emerging notion in data analysis is modeling uncertainty in points.  There are several formulations of these problems where each point $p \in P$ has an \emph{independent} probability distribution $\mu_p$ describing its location and such a point is said to have \emph{locational uncertainty}.  
\emph{Imprecise points} (also called \emph{deterministic uncertainty}) model where a data point $p \in P$ could be anywhere within a fixed continuous range and were originally used for analyzing precision errors. The worst case properties of a point set $P$ under the imprecise model have been
well-studied~\cite{gss-egbra-89,gss-cscah-93,bs-ads-04,hm-ticpps-08,ls-dtip-08,nt-teb-00,obj-ue-05,kl-lbbsd-10,k-bmips-08}.
\emph{Indecisive points}  (or \emph{attribute uncertainty} in database literature~\cite{1644250}) model each $p_i \in P$ as being able to take one of $k$ distinct locations $\{p_{i,1}, p_{i,2}, \ldots, p_{i,k}\}$ with possibly different probabilities, modeling when multiple readings of the same object have been made~\cite{JLP11,MDFW00,CLY09,CG09,ACTY09,ABSHNSW06}.  

We also note another common model of \emph{existential uncertainty} (similar to \emph{tuple uncertainty} in database literature~\cite{1644250} but a bit less general) where the location or value of each $p \in P$ is fixed, but the point may not exist with some probability, modeling false readings~\cite{KCS11a,KCS11b,1644250,CLY09}.

We will focus mainly on the indecisive model of locational uncertainty since it comes up frequently in real-world applications~\cite{MDFW00,ABSHNSW06} (when multiple readings of the same object are made, and typically $k$ is small) and can be used to approximately represent more general continuous representations~\cite{JLP12,Phi08}.  

\subsection{Problem Statement}
Combining these two notions leads to the question: can we create a coreset (specifically for $\eta$-samples) of uncertain input data?  A few more definitions are required to rigorously state this question.  In fact, we develop three distinct notions of how to define the coreset error in uncertain points.  One corresponds to range counting queries, another to querying the mean, and the third to querying the median (actually it approximates the rank for all quantiles).  

For an uncertain point set $P = \{p_1, p_2, \ldots, p_n\}$ with each $p_i = \{p_{i,1}, p_{i,2}, \ldots p_{i,k}\} \subset \bl{R}^d$ we say that $Q \Subset P$ is a \emph{transversal} if $Q \in p_1 \times p_2 \times \ldots \times p_n$.  I.e., $Q = (q_1, q_2, \ldots, q_n)$ is an instantiation of the uncertain data $P$ and can be treated as a ``certain'' point set, where each $q_i$ corresponds to the location of $p_i$.   
$\Pr_{Q \Subset P} [\zeta(Q)]$, (resp. $\E_{Q \Subset P}[\zeta(Q)]$) represents the probability (resp. expected value) of an event $\zeta(Q)$ where $Q$ is instantiated from $P$ according to the probability distribution on the uncertainty in $P$.  

As stated, our goal is to construct a subset of uncertain points $T \subset P$ (including the distribution of each point $p$'s location, $\mu_p$) that preserves specific properties over a family of subsets $(P,\Eu{A})$.  For completeness, the first variation we list cannot be accomplished purely with a coreset as it requires $\Omega(n)$ space.  

\begin{itemize}  \denselist
\item \emph{Range Reporting (\RR) Queries} support queries of a range $r \in \Eu{A}$ and a threshold $\tau$, and return all $p_i \in P$ such that $\Pr_{Q \Subset P} [q_i \in r] \geq \tau$. Note that the fate of each $p_i \in P$ depends on no other $p_j \in P$ where $i \neq j$, so they can be considered independently. Building indexes for this model have been studied~\cite{threshquery,efficientquery,TCXNKP05,ZLTZW12} and effectively solved in $\bl{R}^1$~\cite{ACTY09}.  
\item \emph{Range Expectation (\RE) Queries} consider a range $r \in \Eu{A}$ and report the expected number of uncertain points in $r$, $\E_{Q \Subset P}[ | r \cap Q | ]$. The linearity of expectation allows summing the individual expectations each point $p \in P$ is in $r$. Single queries in this model have also been studied~\cite{JKV07,BDJRV05,JMMV07}.
\item \emph{Range Counting (\RC) Queries} support queries of a range $r \in \Eu{A}$ and a threshold $\tau$, but only return the number of $p_i \in P$ which satisfy $\Pr_{Q \Subset P} [q_i \in r] \geq \tau$.  
The effect of each $p_i \in P$ on the query is separate from that of any other $p_j \in P$ where $i \neq j$.  A random sampling heuristic~\cite{aggregate} has been suggested without proof of accuracy.  
\item \emph{Range Quantile (\RQ) Queries} take a query range $r \in \Eu{A}$, and report the full cumulative density function on the number of points in the range $\Pr_{Q \Subset P} [ |r \cap Q| ]$.  Thus for a query range $r$, this returned structure can produce for any value $\tau \in [0,1]$ the probability that $\tau n$ or fewer points are in $r$.  
Since this is no longer an expectation, the linearity of expectation cannot be used to decompose this query along individual uncertain points.  
\end{itemize}
\vspace{-2mm}
Across all queries we consider, there are two main ways we can approximate the answers.  The first and most standard way is to allow an $\eps$-error (for $0 \leq \eps \leq 1$) in the returned answer for \RQ, \RE, and \RC. 
The second way is to allow an $\alpha$-error in the \emph{threshold} associated with the query itself.  As will be shown, this is not necessary for \RR, \RE, or \RC, but is required to get useful bounds for \RQ.  
Finally, we will also consider probabilistic error $\delta$, demarcating the probability of failure in a randomized algorithm (such as random sampling).  
We strive to achieve these approximation factors with a small size coreset $T \subset P$ as follows: 
\begin{itemize}\denselist
\item [\RE:]
For a given range $r$, let $r(Q) = |Q \cap r|/|Q|$, and let 
$\begin{displaystyle}
E_{r(P)} = \E_{Q \Subset P}[r(Q)].
\end{displaystyle}$
$T \subset P$ is an \emph{$\eps$-\RE coreset} of $(P, \Eu{A})$ if for all queries $r \in \Eu{A}$  we have
$\begin{displaystyle}
\left| E_{r(P)} - E_{r(T)} \right| \leq \eps.  
\end{displaystyle}$
\item [\RC:]
For a range $r \in \Eu{A}$, let 
$
G_{P,r}(\tau) = \frac{1}{|P|} \big|\big\{p_i \in P \mid \Pr_{Q \Subset P}[q_i \in r] \geq \tau\big\} \big|
$
be the fraction of points in $P$ that are in $r$ with probability at least some threshold $\tau$.  
Then $T \subset P$ is an \emph{$\eps$-\RC coreset} of $(P,\Eu{A})$ if for all queries $r \in \Eu{A}$ and all $\tau \in [0,1]$ we have
$\begin{displaystyle}
\left| G_{P,r}(\tau) - G_{T,r}(\tau) \right| \leq \eps.  
\end{displaystyle}$
\item [\RQ:]
For a range $r \in \Eu{A}$, let 
$
F_{P,r}(\tau) = \Pr_{Q \Subset P} [r(Q) \leq \tau] = \Pr_{Q \Subset P} \hspace{-1mm}\left[\frac{|Q \cap r|}{|Q|} \leq \tau\right]
$
be the probability that at most a $\tau$ fraction of $P$ is in $r$.  
Now $T \subset P$ is an \emph{($\eps,\alpha$)-\RQ coreset} of $(P,\Eu{A})$ if for all $r \in \Eu{A}$  and $\tau \in [0,1]$ there exists a $\gamma \in [\tau - \alpha, \tau + \alpha]$ such that 
$\begin{displaystyle}
\left| F_{P,r}(\tau) - F_{T,r}(\gamma) \right| \leq \eps.
\end{displaystyle}$
In such a situation, we also say that $F_{T,r}$ is an \emph{$(\eps,\alpha)$-quantization} of $F_{P,r}$.  
\end{itemize}

\vspace{-0mm}
A natural question is whether we can construct a ($\eps,0$)-\RQ coreset where there is not a secondary $\alpha$-error term on $\tau$.  We demonstrate that there are no useful non-trivial bounds on the size of such a coreset.    

When the $(\eps,\alpha)$-quantization $F_{T,r}$ need not be explicitly represented by a coreset $T$, then L\"offler and Phillips~\cite{LP09,JLP11} show a different small space representation that can replace it in the above definition of an $(\eps,\alpha)$-\RQ coreset with probability at least $1-\delta$.  First randomly create $m = O((1/\eps^2) \log (1/\delta))$ transversals $Q_1, Q_2, \ldots, Q_m$, and for each transversal $Q_i$ create an $\alpha$-sample $S_i$ of $(Q_i, \Eu{A})$.  Then to satisfy the requirements of $F_{T,r}(\tau)$, there exists some $\gamma \in [\tau-\alpha, \tau + \alpha]$ such that we can return $(1/m) |\{S_i \mid r(S_i) \leq \gamma\}|$, and it will be within $\eps$ of $F_{P,r}(\tau)$.  
However, this is subverting the attempt to construct and understand a coreset to answer these questions.  A coreset $T$ (our goal) can be used as proxy for $P$ as opposed to querying $m$ distinct point sets.  This alternate approach also does not shed light into how much information can be captured by a small size point set, which is provided by bounds on the size of a coreset.  

\Paragraph{Simple example}
We illustrate a simple example with $k=2$ and $d=1$, where $n=10$ and the $nk =20$ possible locations of the $10$ uncertain points are laid out in order:
\begin{align*}
p_{1,1} &< p_{2,1} < p_{3,1} < p_{4,1} < p_{5,1} <  p_{6,1} < p_{3,2} < p_{7,1} 
<  
p_{8,1} < p_{8,2} < p_{9,1} < p_{10,1} < p_{5,2} < p_{10,2} <  p_{2,2}
\\&< 
 p_{9,2}  < p_{7,2} <  p_{4,2} < p_{6,2} <  p_{1,2}.
\end{align*}

We consider a coreset $T \subset P$ that consists of the uncertain points $T = \{p_1, p_3, p_5, p_7, p_9\}$. 
Now consider a specific range $r \in \Eu{I}_+$, a one-sided interval that contains $p_{5,2}$ and smaller points, but not $p_{10,2}$ and larger points.  
We can now see that $F_{T,r}$ is an $(\eps'=0.1016, \alpha=0.1)$-quantization of $F_{P,r}$ in Figure \ref{fig:cdf-10}; this follows since at $F_{P,r}(0.75) = 0.7734$ either $F_{T,r}(x)$ is at most $0.5$ for $x \in [0.65,0.8)$ and is at least $0.875$ for $x \in [0.8,0.85]$.  
Also observe that 
\[
\left|E_{r(P)} - E_{r(T)}\right| = \left|\frac{13}{20} - \frac{7}{10} \right| = \frac{1}{20} = \eps.
\]  
When these errors (the $(\eps',\alpha)$-quantization and $\eps$-error) hold for \emph{all} ranges in some range space, then $T$ is an $(\eps',\alpha)$-\RQ coreset or $\eps$-\RC coreset, respectively.  

To understand the error associated with an \RC coreset, also consider the threshold $\tau = 2/3$ with respect to the range $r$.  Then in range $r$, $2/10$ of the uncertain points from $P$ are in $r$ with probability at least $\tau = 2/3$ (points $p_3$ and $p_8$).  Also $1/5$ of the uncertain points from $T$ are in $r$ with probability at least $\tau = 2/3$ (only point $p_3$).  So there is $0$ \RC error for this range and threshold.

\begin{figure}[t]
\centering
\includegraphics[width=0.6\linewidth]{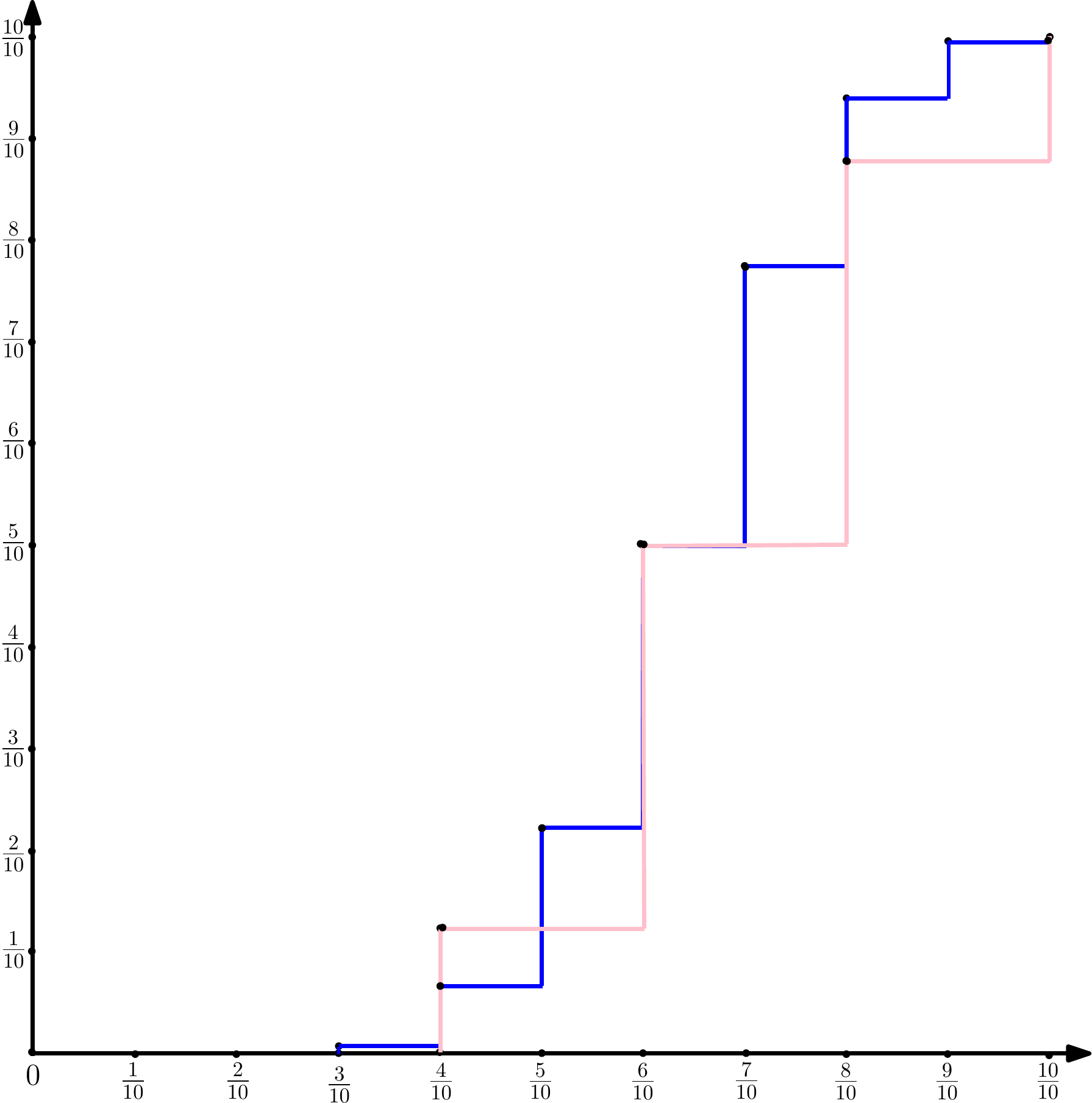}
\caption{Example cumulative density functions  ($F_{T,r}$, in red with fewer steps, and $F_{P,r}$, in blue with more steps) on uncertain point set $P$ and a coreset $T$ for a specific range.}
\label{fig:cdf-10}
\end{figure}

\subsection{Our Results}

We provide the first results for \RE-, \RC-, and \RQ-coresets with guarantees.  In particular we show that a random sample $T$ of size $O((1/\eps^2) (\nu + \log(1/\delta))$ with probability $1-\delta$ is an $\eps$-\RC coreset for any family of ranges $\Eu{A}$ whose associated range space has VC-dimension $\nu$.  
Otherwise we enforce that each uncertain point has $k$ possible locations, then a sample $T$ of size $O((1/\eps^2) (\nu + \log(k/\delta))$ suffices for an $\eps$-\RE coreset.  

Then we leverage discrepancy-based techniques~\cite{Mat99,Cha01} for some specific families of ranges $\Eu{A}$, to improve these bounds to $O((1/\eps) \textsf{poly}(k, \log(1/\eps)))$.  This is an important improvement since $1/\eps$ can be quite large (say $100$ or more), while $k$, interpreted as the number of readings of a data point, is small for many applications (say $5$).  
In $\bl{R}^1$, for one-sided ranges we construct $\eps$-\RE and $\eps$-\RC coresets of size $O((\sqrt{k}/\eps) \log(k/\eps))$.
For axis-aligned rectangles in $\bl{R}^d$ we construct $\eps$-\RE coresets of size $O((\sqrt{k}/\eps) \cdot \log^{\frac{3d-1}{2}} (k/\eps))$ and $\eps$-\RC coresets of size $O((k^{3d+ \frac{1}{2}}/\eps) \log^{6d-\frac{1}{2}} (k/\eps)))$.  
Finally, we show that any $\eps$-\RE coreset of size $t$ is also an $(\eps,\alpha_{\eps,t})$-\RQ coreset with value $\alpha_{\eps,t} = \eps + \sqrt{(1/2t)\ln(2/\eps)}$.  

These results leverage new connections between uncertain points and both discrepancy of permutations and colored range searching that may be of independent interest.  

\section{Discrepancy and Permutations}

The key tools we will use to construct small coresets for uncertain data is discrepancy of range spaces, and specifically those defined on permutations.  
Consider a set $X$, a range space $(X,\Eu{A})$, and a coloring $\chi : X \to \{-1, +1\}$.  
Then for some range $A \in \Eu{A}$, the discrepancy is defined 
$\disc_\chi(X,A) = |\sum_{x \in X \cap A} \chi(x)|$.  We can then extend this to be over all ranges $\disc_\chi(X,\Eu{A}) = \max_{A \in \Eu{A}} \disc_\chi(X,A)$ and over all colorings $\disc(X,\Eu{A}) = \min_\chi \disc_\chi(X,\Eu{A})$.  

Consider a ground set $(P,\Sigma_k)$ where $P$ is a set of $n$ objects, and $\Sigma_k = \{\sigma_1, \sigma_2, \ldots, \sigma_k\}$ is a set of $k$ permutations over $P$ so each $\sigma_j : P \to [n]$.  We can also consider a family of ranges $\Eu{I}_k$ as a set of intervals defined on \emph{one} of the $k$ permutations so $I_{x,y,j} \in \Eu{I}_k$ is defined so $P \cap I_{x,y,j} = \{p \in P \mid x < \sigma_j(p) \leq y\}$ for $x < y \in [0,n]$ and $j \in [k]$. The pair $((P,\Sigma_k),\Eu{I}_k)$ is then a range space, defining a set of subsets of $P$.  

A canonical way to obtain $k$ permutations from an uncertain point set $P = \{p_1, p_2, \ldots, p_n\}$ is as follows.  
Define the \emph{$j$th canonical traversal} of $P$ as the set $P_j = \cup_{i=1}^n p_{i,j}$. When each $p_{i,j} \in \bl{R}^1$, the sorted order of each canonical traversal $P_j$ defines a permutation on $P$ as $\sigma_j(p_i) = |\{p_{i',j} \in P_j \mid p_{i',j} \leq p_{i,j} \} |$, that is $\sigma_j(p_i)$ describes how many locations (including $p_{i,j}$) in the traversal $P_j$ have value less than or equal to $p_{i,j}$.  In other words, $\sigma_j$ describes the sorted order of the $j$th point among all uncertain points.  Then, given an uncertain point set, let the canonical traversals define the \emph{canonical $k$-permutation} as $(P,\Sigma_k)$.  

A geometric view of the permutation range space embeds $P$ as $n$ fixed points in $\bl{R}^k$ and considers ranges which are defined by inclusion in $(k-1)$-dimensional slabs, defined by two parallel half spaces with normals aligned along one of the coordinate axes.  Specifically, the $j$th coordinate of the $i$th point is $\sigma_j(p_i)$, and if the range is on the $j$th permutation, then the slab is orthogonal to the $j$th coordinate axis.  

Another useful construction from an uncertain point set $P$ is the set $P_\cert$ of all locations any point in $P$ might occur.  
Specifically, for every uncertain point set $P$ we can define the corresponding certain point set $P_{\cert} = \bigcup_{i \in [n]} p_i = \bigcup_{j \in [k]} P_j = \bigcup_{i \in [n], j \in [k]} p_{i,j}$. 
We can also extend any coloring $\chi$ on $P$ to a coloring in $P_\cert$ by letting $\chi_\cert(p_{ij}) = \chi(p_i)$, for $i \in [n]$ and $j \in [k]$.  
Now we can naturally define the discrepancy induced on $P_\cert$ by any coloring $\chi$ of $P$ as 
$\disc_{\chi_\cert} (P_\cert, \Eu{A}) = \max_{r \in \Eu{A}} \sum_{p_{i,j} \in P_\cert \cap r} \chi(p_{i,j})$.

\Paragraph{From low-discrepancy to $\eps$-samples}
There is a well-studied relationship between range spaces that admit low-discrepancy colorings, and creating $\eps$-samples of those range spaces~\cite{CM96,Mat99,Cha01,Bec81a}.  The key relationship states that if $\disc(X,\Eu{A}) = \gamma \log^{\omega}(n)$, then there exists an $\eps$-sample of $(P,\Eu{A})$ of size $O((\gamma/\eps) \cdot \log^\omega(\gamma/\eps))$~\cite{Phi08}, for values $\gamma, \omega$ independent of $n$ or $\eps$.  
Construct the coloring, and with equal probability discard either all points colored 
either $-1$ or those colored $+1$.  This roughly halves the point set size, and also implies zero over-count in expectation for any fixed range.  Repeat this coloring and reduction of points until the desired size is achieved.  This can be done efficiently in a distributed manner through a merge-reduce framework~\cite{CM96};   
The take-away is that a method for a low-discrepancy coloring directly implies a method to create an $\eps$-sample, where the counting error is in expectation zero for any fixed range.  
We describe and extend these results in much more detail in Appendix \ref{app:MR}.  

\section{\RE Coresets}\label{sec:rec}
First we will analyze $\eps$-\RE coresets through the $P_\cert$ interpretation of uncertain point set $P$.  The canonical transversals $P_j$ of $P$ will also be useful.  In Section \ref{sec:RE-disc} we will relate these results to a form of discrepancy.  

\begin{lemma}\label{lem:certtocomp}
 $T \subset P$ is an $\eps$-\RE coreset for $(P,\Eu{A})$ if and only if $T_{\text{cert}} \subset P_\cert$ is an $\eps$-sample for $(P_\cert,\Eu{A})$.  
\end{lemma}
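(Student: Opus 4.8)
The plan is to show that the expected-density quantity $E_{r(P)}$ is \emph{exactly} the relative density of the certain point set $P_\cert$ inside the range $r$, and likewise that $E_{r(T)}$ equals the relative density of $T_\cert$. Once this identity is established, the difference $|E_{r(P)} - E_{r(T)}|$ coincides term-for-term with the quantity controlled by the $\eps$-sample condition on $(P_\cert,\Eu{A})$, and the iff follows immediately by maximizing over $r \in \Eu{A}$ in both directions.

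First I would unfold the definition of $E_{r(P)}$. Since every transversal $Q$ has $|Q| = n$, we have $r(Q) = |Q \cap r|/n$, so $E_{r(P)} = \frac{1}{n}\E_{Q \Subset P}[|Q \cap r|]$. Writing $|Q \cap r| = \sum_{i=1}^n \mathbf{1}[q_i \in r]$ and applying linearity of expectation, this becomes $\frac{1}{n}\sum_{i=1}^n \Pr_{Q \Subset P}[q_i \in r]$. Next I would evaluate each single-point probability: under the indecisive model with each of the $k$ candidate locations equally likely, $\Pr[q_i \in r]$ is exactly $\frac{1}{k}|\{j : p_{i,j} \in r\}| = \frac{1}{k}|p_i \cap r|$. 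Summing over $i$ and recalling $P_\cert = \bigcup_{i,j} p_{i,j}$ gives $\sum_i \Pr[q_i \in r] = \frac{1}{k}|P_\cert \cap r|$, hence $E_{r(P)} = \frac{1}{nk}|P_\cert \cap r| = |P_\cert \cap r|/|P_\cert|$, using $|P_\cert| = nk$. The identical computation applied to $T$ (with $|T_\cert| = |T|\cdot k$) yields $E_{r(T)} = |T_\cert \cap r|/|T_\cert|$.

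Combining these, $|E_{r(P)} - E_{r(T)}| = \big| |P_\cert \cap r|/|P_\cert| - |T_\cert \cap r|/|T_\cert| \big|$ for every single $r$, which is precisely the per-range $\eps$-sample error appearing inside the maximum in the definition of an $\eps$-sample. Taking the maximum over $r \in \Eu{A}$ then shows the $\eps$-\RE bound holds for all ranges if and only if the $\eps$-sample bound does, giving both directions of the equivalence at once.

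The main point to be careful about is the normalization: one must track that each uncertain point contributes exactly $k$ locations to $P_\cert$, so that the expectation, which averages locations with weight $1/k$ per point, lines up with the uniform density $|P_\cert \cap r|/(nk)$; the same bookkeeping must be checked for $T$ so that the two normalizations match. This is the only place where an off-by-a-factor slip could creep in. (If the $k$ locations carry non-uniform probabilities, the same argument goes through verbatim with $P_\cert$ treated as a weighted set and the $\eps$-sample taken in the weighted sense.) Since the resulting identity is exact, no approximation slack is lost and the statement is an honest iff rather than a one-sided implication.
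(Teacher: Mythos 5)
Your proposal is correct and follows essentially the same route as the paper's proof: both establish, via linearity of expectation and the uniform $1/k$ location probabilities, the exact identity $E_{r(P)} = |P_\cert \cap r|/|P_\cert|$ (and likewise for $T$), from which the equivalence of the two error conditions is immediate. Your write-up is somewhat more explicit about the normalization bookkeeping and about why the identity yields both directions of the iff, but the underlying argument is the same.
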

\begin{proof}
First note that since $\Pr[p_i = p_{ij}] = \frac{1}{k}$ $\forall i,j$, hence by linearity of expectations we have that 
$\E_{Q \Subset P} [|Q \cap r|] = \sum_{i=1}^n E[|p_i \cap r|] =\frac{1}{k} |P_\cert \cap r |$. 
Now, direct computation gives us:
\begin{align*}
\left| \frac{|P_\cert \cap r|}{|P_\cert|} - \frac{|T_\cert \cap r|}{|T_\cert|} \right|
= \left| \frac{|P_\cert \cap r|}{k |P|} - \frac{|T_\cert \cap r|}{k |T|} \right| 
= \left|  E_{r(P)} - E_{r(T)} \right|  < \eps. 
\end{align*}
\end{proof}

The next implication enables us to determine an $\eps$-\RE coreset on $P$ from $\eps$-samples on each $P_j \Subset P$.  Recall $P_j$ is the $j$th canonical transversal of $P$ for $j \in [k]$, and is defined similarly for a subset $T \subset P$ as $T_j$.

\begin{lemma}\label{lem:eachcomp}
Given a range space $(P_\cert,\Eu{A})$, if we have $T \subset P$ such that $T_j$ is an $\eps$-sample for $(P_j, \Eu{A})$ for all $j \in [k]$, then $T$ is an $\eps$-\RE coreset for $(P, \Eu{A})$. 
\end{lemma}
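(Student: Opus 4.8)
The plan is to reduce the claim to Lemma~\ref{lem:certtocomp} and then close the gap with a short averaging argument. By Lemma~\ref{lem:certtocomp}, $T$ is an $\eps$-\RE coreset for $(P,\Eu{A})$ precisely when $T_\cert$ is an $\eps$-sample for $(P_\cert,\Eu{A})$, so it suffices to establish the latter from the hypothesis that each $T_j$ is an $\eps$-sample for $(P_j,\Eu{A})$. The structural fact that drives everything is that $P_\cert$ is exactly the union of the $k$ canonical transversals, $P_\cert = \bigcup_{j\in[k]} P_j$, with $|P_j| = |P|$ for every $j$ (each transversal selects one location per uncertain point), and likewise $T_\cert = \bigcup_{j\in[k]} T_j$ with $|T_j| = |T|$.

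First I would fix an arbitrary range $r \in \Eu{A}$ and rewrite the two relative densities appearing in the $\eps$-sample condition as uniform averages over the $k$ transversals. Since $|P_\cert| = k|P|$ and $|P_\cert \cap r| = \sum_{j} |P_j \cap r|$, we get
\[
\frac{|P_\cert \cap r|}{|P_\cert|} = \frac{1}{k}\sum_{j=1}^{k} \frac{|P_j \cap r|}{|P_j|},
\qquad
\frac{|T_\cert \cap r|}{|T_\cert|} = \frac{1}{k}\sum_{j=1}^{k} \frac{|T_j \cap r|}{|T_j|},
\]
the second identity following identically since $|T_\cert| = k|T|$ and $|T_j| = |T|$.

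Then I would subtract these two averages, pull the difference inside the sum, and apply the triangle inequality to bound the discrepancy of the averages by the average of the per-transversal discrepancies. Each term $\big| |P_j \cap r|/|P_j| - |T_j \cap r|/|T_j| \big|$ is at most $\eps$ by the assumption that $T_j$ is an $\eps$-sample for $(P_j,\Eu{A})$, so the average is also at most $\eps$. Taking the maximum over $r \in \Eu{A}$ shows that $T_\cert$ is an $\eps$-sample for $(P_\cert,\Eu{A})$, which finishes the proof.

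I do not expect a genuine obstacle here: the content is a convexity/averaging argument, and the nontrivial modeling work was already absorbed into Lemma~\ref{lem:certtocomp}. The only point that requires care is confirming that the denominators are uniform across $j$---that is, that no weighted average is needed---which holds because each canonical transversal contributes exactly one location per point of $P$ (resp.\ $T$). If instead the locations carried nonuniform probabilities, the same argument would go through with a convex combination in place of the uniform average, but for the stated uniform model the clean $\frac{1}{k}$ averaging suffices.
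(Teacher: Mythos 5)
Your proposal is correct and matches the paper's argument: the paper likewise uses the identity $E_{r(P)} = |P_\cert \cap r|/|P_\cert|$ with $|P_\cert| = k|P|$, decomposes $P_\cert \cap r$ as the disjoint union $\sum_{j=1}^k |P_j \cap r|$, and applies the triangle inequality to bound the error by the average $\frac{1}{k}\sum_j \eps = \eps$. Whether one phrases the target as the $\eps$-sample condition on $P_\cert$ (via Lemma~\ref{lem:certtocomp}) or directly as $|E_{r(P)} - E_{r(T)}| \leq \eps$ is purely cosmetic, so this is essentially the same proof.
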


\begin{proof} Consider an arbitrary range $r \in \Eu{R}$, and compute directly  
 $\left| E_{r(P)} - E_{r(T)}  \right| $.
Recalling that $E_{r(P)} = \frac{|P_\cert \cap r|}{|P_\cert|}$
and observing that $|P_\cert| = k |P|$, we get that:
\begin{align*}
\left| E_{r(P)} - E_{r(T)} \right| 
= 
\left| \frac{\sum_{j=1}^k |P_j \cap r|}{k |P|} -  \frac{\sum_{j=1}^k |T_j \cap r|}{k |T|} \right|  
\leq 
\frac{1}{k}  \sum_{j=1}^k \left| \frac{|P_j \cap r|}{|P|} - \frac{|T_j \cap r|}{|T|} \right| 
\leq 
\frac{1}{k} (k \eps) 
= \eps. 
\end{align*}
\end{proof}

\subsection{Random Sampling}
 We show that a simple random sampling gives us an $\eps$-\RE coreset of $P$.
  
\begin{theorem}
For an uncertain points set $P$ and range space $(P_\cert, \Eu{A})$ with VC-dimension $\nu$,  a random sample $T \subset P$ of size $O((1/\eps^2) (\nu + \log (k/\delta)))$ is an $\eps$-\RE coreset of $(P,\Eu{I}$) with probability at least $1- \delta$.
\label{thm:RE-samp}
\end{theorem}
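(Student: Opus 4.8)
The plan is to reduce the \RE-coreset guarantee to standard $\eps$-sample guarantees on each canonical transversal, and then absorb a union-bound factor of $k$. By Lemma~\ref{lem:eachcomp} it suffices to show that a single random sample $T \subset P$ of the stated size simultaneously makes every $T_j$ an $\eps$-sample for $(P_j, \Eu{A})$, over all $j \in [k]$. So the entire argument is about controlling the $k$ transversals at once with one draw.

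First I would observe that because there is a natural bijection $p_i \leftrightarrow p_{i,j}$ between $P$ and its $j$th canonical transversal $P_j$, drawing $T$ as a uniform random sample of $P$ induces, for each $j$, a uniform random sample $T_j$ of $P_j$ of the same cardinality $m = |T|$. Moreover each range space $(P_j, \Eu{A})$ is simply the restriction of $(P_\cert, \Eu{A})$ to the subset $P_j \subset P_\cert$, so its VC-dimension is at most $\nu$. Next I would invoke the classical $\eps$-sample bound for a range space of VC-dimension $\nu$: a uniform random sample of size $O((1/\eps^2)(\nu + \log(1/\delta')))$ is an $\eps$-sample with probability at least $1-\delta'$. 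Setting $\delta' = \delta/k$ and $m = O((1/\eps^2)(\nu + \log(k/\delta)))$, each individual $T_j$ fails to be an $\eps$-sample of $(P_j, \Eu{A})$ with probability at most $\delta/k$.

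Finally, a union bound over the $k$ transversals shows that with probability at least $1-\delta$ all $T_j$ are \emph{simultaneously} $\eps$-samples; Lemma~\ref{lem:eachcomp} then immediately yields that $T$ is an $\eps$-\RE coreset of $(P,\Eu{A})$, as claimed. The only point requiring care is that the $k$ samples $T_j$ are highly dependent, since they are all determined by the same underlying draw $T$. This causes no difficulty because the union bound does not require independence; indeed, this shared draw is exactly why the sample size only grows additively as $\log k$ rather than multiplicatively in $k$. Thus the main (and rather mild) obstacle is conceptual: recognizing that one random choice of $T$ governs all $k$ canonical transversals at once, and that the per-transversal VC-dimension is bounded by the VC-dimension $\nu$ of the full range space $(P_\cert, \Eu{A})$.
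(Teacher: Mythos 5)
Your proposal is correct and follows essentially the same route as the paper's own proof: reduce to Lemma~\ref{lem:eachcomp}, note that one random sample $T$ of $P$ induces a random sample $T_j$ of each canonical transversal $P_j$, apply the standard VC-dimension $\eps$-sample bound with failure probability $\delta' = \delta/k$, and union bound over the $k$ transversals. Your added remarks---that the union bound tolerates the dependence among the $T_j$ and that each $(P_j,\Eu{A})$ has VC-dimension at most $\nu$ as a restriction of $(P_\cert,\Eu{A})$---are correct and only make explicit what the paper leaves implicit.
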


\begin{proof} 
A random sample $T_j$ of size $O((1/\eps^2) (\nu + \log (1/\delta')))$ is an $\eps$-sample of any $(P_j, \Eu{A})$ with probability at least $1-\delta'$ ~\cite{LLS01}.  Now assuming $T \subset P$ resulted from a random sample on $P$, it induces the $k$ disjoint canonical transversals $T_j$ on $T$, such that $T_j \subset P_j$ and $|T_j| = O((1/\eps^2) (\nu + \log (1/\delta')))$ for $j \in [k]$.  
Each $T_j$ is an $\eps$-sample of $(P_j,\Eu{A})$ for any single $j \in [k]$ with probability at least $1 - \delta'$. 
Following Lemma \ref{lem:eachcomp} and using union bound, we conclude that $T \subset P$ is an $\eps$-\RE coreset for uncertain point set $P$ with probability at least $1 - k\delta'$.  Setting $\delta' = \delta/k$ proves the theorem.  
\end{proof}

\subsection{\RE-Discrepancy and its Properties}
\label{sec:RE-disc}
Next we extend the well-studied relationship between geometric discrepancy and $\eps$-samples on certain data towards $\eps$-\RE coresets on uncertain data.  

We first require precise and slightly non-standard definitions.
 
We introduce a new type of discrepancy based on the expected value of uncertain points called \emph{\RE-discrepancy}.  Let $P_\chi^+$ and $P_\chi^-$ denote the sets of uncertain points from $P$ colored $+1$ or $-1$, respectively, by $\chi$.  
Then $\REd_\chi(P,r) =  |P| \cdot |E_{r(P_\chi^+)} - E_{r(P)}|$ for any $r \in \Eu{A}$. 
The usual extensions then follow:
$\REd_\chi(P,\Eu{A}) = \max_{r \in \Eu{A}} \REd(P,r)$ and
$\REd(P,\Eu{A}) = \min_\chi \REd_\chi(P,\Eu{A})$.  
Note that $(P,\Eu{A})$ is technically not a range space, since $\Eu{A}$ defines subsets of $P_\cert$ in this case, not of $P$.

\begin{lemma}\label{lem:REcolor}
Consider a coloring $\chi : P \to \{-1,+1\}$ such that $\REd_\chi(P,\Eu{A}) = \gamma \log^\omega(n)$ and $|P_\chi^+| = n/2$.  
Then the set $P_\chi^+$ is an $\eps$-\RE coreset of $(P,\Eu{A})$ with $\eps = \frac{ \gamma}{n} \log (n)$.

Furthermore, if a subset $T \subset P$ has size $n/2$ and is an $(\frac{ \gamma}{n} \log^\omega (n))$-\RE coreset, then it defines a coloring $\chi$ (where $\chi(p_i) = +1$ for $p_i \in T$) that has $\REd_\chi(P,\Eu{A}) = \gamma \log^\omega(n)$.  
\end{lemma}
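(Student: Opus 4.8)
The plan is to observe that the definition of \REd was engineered precisely so that this correspondence reduces to a one-line rescaling, exactly mirroring the classical equivalence between low-discrepancy colorings and $\eps$-samples on certain data. The only facts I would invoke are the identity $E_{r(P)} = |P_\cert \cap r|/(k|P|)$ established inside the proof of Lemma~\ref{lem:certtocomp} (together with the analogous $E_{r(P_\chi^+)} = |(P_\chi^+)_\cert \cap r|/(k|P_\chi^+|)$, obtained by the same linearity-of-expectation computation applied to $P_\chi^+$), and the fact that $|P| = n$.

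For the forward direction, I would start from the hypothesis $\REd_\chi(P,\Eu{A}) = \gamma \log^\omega(n)$, which by the definition $\REd_\chi(P,\Eu{A}) = \max_{r \in \Eu{A}} |P|\cdot|E_{r(P_\chi^+)} - E_{r(P)}|$ means $|P|\cdot|E_{r(P_\chi^+)} - E_{r(P)}| \le \gamma \log^\omega(n)$ for every $r \in \Eu{A}$. Dividing through by $|P| = n$ yields $|E_{r(P_\chi^+)} - E_{r(P)}| \le (\gamma/n)\log^\omega(n)$ uniformly over $r$, which is verbatim the defining inequality for $P_\chi^+$ to be an $\eps$-\RE coreset of $(P,\Eu{A})$ with $\eps = (\gamma/n)\log^\omega(n)$ (reading the claimed $\eps$ with the same exponent $\omega$ appearing in the discrepancy bound). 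The hypothesis $|P_\chi^+| = n/2$ is not used in this estimate itself; it is present only to certify that the retained set is a genuine halving of $P$, so that the coloring-and-discard step fits the merge-reduce framework of Section~2.

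For the reverse ``furthermore'' direction, given a subset $T \subset P$ of size $n/2$ that is a $(\frac{\gamma}{n}\log^\omega(n))$-\RE coreset, I would define $\chi$ by $\chi(p_i) = +1$ iff $p_i \in T$, so that $P_\chi^+ = T$ and, because $|T| = n/2$, the coloring is balanced. Multiplying the coreset inequality $|E_{r(P)} - E_{r(T)}| \le (\gamma/n)\log^\omega(n)$ by $|P| = n$ gives $\REd_\chi(P,r) = n\,|E_{r(T)} - E_{r(P)}| \le \gamma \log^\omega(n)$ for every $r$, and taking the maximum over $r \in \Eu{A}$ yields $\REd_\chi(P,\Eu{A}) = \gamma \log^\omega(n)$, as required.

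Since each step is a direct unwinding of definitions followed by a single rescaling by $n$, there is no genuine analytic obstacle. The only points demanding care are the bookkeeping of the two normalizations — the $k|P|$ appearing in $E_{r(P)}$ versus the bare factor $|P|$ in the definition of \REd — and recognizing that the balance condition $|P_\chi^+| = n/2$ plays a structural (halving) role rather than entering the per-range error. If one wanted a clean biconditional at every size rather than only at $n/2$, one could simply drop the balance clause; I would keep it stated as above to match its intended use in the iterative size-reduction scheme.
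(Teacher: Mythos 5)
Your proof is correct and follows essentially the same route as the paper's: the paper proves the ``furthermore'' direction by taking the maximizing range $r$ and observing that $|E_{r(P)} - E_{r(P_\chi^+)}| = \frac{1}{n}\REd_\chi(P,r)$, i.e.\ the same single rescaling by $|P|=n$ that you carry out in both directions (the paper notes the first direction ``follows symmetrically''). Your side remarks are also sound --- the balance condition $|P_\chi^+|=n/2$ indeed plays no role in the per-range estimate, and reading the claimed $\eps$ with the exponent $\omega$ is the intended correction of a typo in the lemma statement.
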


\begin{proof}
We prove the second statement, the first follows symmetrically.  
We refer to the subset $T$ as $P_\chi^+$.  
Let  $r = \arg \max_{r' \in \Eu{A}} |E_{r'(P)} - E_{r'(P_\chi^+)}|$.  This implies  
$
\frac{\gamma}{n} \log^\omega n
\geq 
| E_{r(P)} - E_{r(P_\chi^+)} | 
= 
%\left| \E\left[ \frac{|P \cap r|}{|P|}\right] - \E\left[ \frac{|P_\chi^+ \cap r|}{|P_\chi^+|} \right] \right|
%= 
\frac{1}{n} \REd_\chi(P,r). %\qedhere
$
\end{proof}

We can now recast \RE-discrepancy to discrepancy on $P_\cert$. From  Lemma \ref{lem:certtocomp} 
$\left| \frac{|P_\cert \cap r|}{k |P|} - \frac{|T_\cert \cap r|}{k |T|} \right| =\left|  E_{r(P)} - E_{r(T)} \right|$ and
after some basic substitutions we obtain the following.  

\begin{lemma}\label{lem:pcerttoexp}
$\REd_\chi(P,\Eu{A}) = 
\frac{1}{k} \disc_{\chi_{\cert}} (P_\cert , \Eu{A})$.
\end{lemma}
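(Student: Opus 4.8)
The plan is to unwind both sides of the claimed identity $\REd_\chi(P,\Eu{A}) = \frac{1}{k} \disc_{\chi_\cert}(P_\cert,\Eu{A})$ to a common intermediate expression, namely a signed count over the locations in $P_\cert$ lying inside a fixed range $r$. First I would fix an arbitrary range $r \in \Eu{A}$ and compute the left-hand quantity $\REd_\chi(P,r) = |P| \cdot |E_{r(P_\chi^+)} - E_{r(P)}|$ directly. Using the definition of $E_{r(\cdot)}$ together with the relation from Lemma~\ref{lem:certtocomp} that $E_{r(P)} = |P_\cert \cap r|/(k|P|)$ (and analogously $E_{r(P_\chi^+)} = |(P_\chi^+)_\cert \cap r|/(k|P_\chi^+|)$), the factor of $|P|$ should cancel the denominators cleanly, leaving something proportional to a difference of two location-counts divided by $k$.

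The key algebraic step is to recognize that the difference of these two normalized counts equals a signed sum over $P_\cert$. I would write each location $p_{i,j}$ as contributing $\chi(p_{i,j}) = \chi(p_i)$ to the discrepancy, and observe that the count $|(P_\chi^+)_\cert \cap r|$ counts exactly the $+1$-colored locations in $r$, while $|P_\cert \cap r|$ counts all of them. The standard identity is that, when $|P_\chi^+| = n/2$ so that the positive and negative classes are balanced, the difference $|(P_\chi^+)_\cert \cap r| - \tfrac{1}{2}|P_\cert \cap r|$ equals $\tfrac{1}{2}\sum_{p_{i,j} \in P_\cert \cap r} \chi(p_{i,j})$, which is exactly $\tfrac{1}{2}\disc_{\chi_\cert}(P_\cert, r)$ up to the absolute value. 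Taking the maximum over $r \in \Eu{A}$ on both sides then converts the per-range identity into the stated identity over the whole range family, and the factor $1/k$ survives from the canonical-transversal normalization $|P_\cert| = k|P|$.

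I expect the main obstacle to be bookkeeping the balance assumption $|P_\chi^+| = n/2$ correctly, since the clean cancellation of $1/2$ factors relies on it; if the coloring is not balanced the two expressions would differ by a lower-order term and the exact equality would fail. I would therefore want to confirm that the intended statement implicitly carries this balance condition (as it does in the companion Lemma~\ref{lem:REcolor}), and track whether the claimed identity is meant to hold for general $\chi$ or only balanced $\chi$. A secondary care point is keeping the absolute values aligned: $\REd_\chi$ and $\disc_{\chi_\cert}$ are each defined with an absolute value inside, so I must ensure the signs match when I pull the max outside. Once these are pinned down, the proof is a one-line substitution chain, which is consistent with the terse statement that the result follows ``after some basic substitutions.''
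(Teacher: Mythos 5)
Your proposal is correct and follows essentially the same route as the paper, which obtains the lemma by direct substitution of $E_{r(P)} = |P_\cert \cap r|/(k|P|)$ from Lemma~\ref{lem:certtocomp} together with the identity $|(P_\chi^+)_\cert \cap r| - \tfrac12 |P_\cert \cap r| = \tfrac12 \sum_{p_{i,j} \in P_\cert \cap r} \chi_\cert(p_{i,j})$, then takes the maximum over $r$. Your concern about the balance condition is well-founded and worth recording: the stated equality is exact only when $|P_\chi^+| = n/2$, since otherwise $E_{r(P_\chi^+)} = |(P_\chi^+)_\cert \cap r|/(k|P_\chi^+|)$ does not simplify to $2|(P_\chi^+)_\cert \cap r|/(kn)$ and the two sides genuinely differ; the paper leaves this hypothesis implicit here, consistent with its explicit use in Lemma~\ref{lem:REcolor}.
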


This does not immediately solve $\eps$-\RE coresets by standard discrepancy  techniques on $P_\cert$ because we need to find a coloring $\chi$ on $P$.  A coloring $\chi_\cert$ on $P_\cert$ may not be consistent across all $p_{i,j} \in p_i$.
The following lemma allows us to reduce this to a problem of coloring each canonical transversal $P_j$.

\begin{lemma}\label{lem:jointdisc}
$\REd_{\chi}(P,\Eu{A}) \leq  \max_j \disc_{\chi_\cert}(P_j, \Eu{A}).$
\end{lemma}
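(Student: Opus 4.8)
The plan is to reduce the claim to the exact identity already established in Lemma~\ref{lem:pcerttoexp} and then bound the single discrepancy on $P_\cert$ by the per-traversal discrepancies. First I would apply Lemma~\ref{lem:pcerttoexp} to rewrite the quantity of interest as $\REd_\chi(P,\Eu{A}) = \frac1k \disc_{\chi_\cert}(P_\cert,\Eu{A})$, so that it suffices to prove the single inequality $\disc_{\chi_\cert}(P_\cert,\Eu{A}) \le k \max_j \disc_{\chi_\cert}(P_j,\Eu{A})$; the factor $\frac1k$ will then cancel the $k$ and deliver the lemma.

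For that inequality I would use the fact that $P_\cert$ decomposes as the disjoint union $P_\cert = \bigcup_{j=1}^k P_j$ of its canonical traversals, so each location $p_{i,j}$ belongs to exactly one $P_j$. Fixing an arbitrary $r \in \Eu{A}$, the signed count over $P_\cert \cap r$ splits across the traversals,
\[
\sum_{p_{i,j} \in P_\cert \cap r} \chi_\cert(p_{i,j}) = \sum_{j=1}^k \sum_{p_{i,j} \in P_j \cap r} \chi_\cert(p_{i,j}),
\]
and applying the triangle inequality followed by bounding each inner signed count by its maximum over ranges gives
\[
\Big| \sum_{p_{i,j} \in P_\cert \cap r} \chi_\cert(p_{i,j}) \Big| \le \sum_{j=1}^k \disc_{\chi_\cert}(P_j,\Eu{A}) \le k \max_j \disc_{\chi_\cert}(P_j,\Eu{A}).
\]
Taking the maximum over $r \in \Eu{A}$ on the left-hand side then yields exactly $\disc_{\chi_\cert}(P_\cert,\Eu{A}) \le k \max_j \disc_{\chi_\cert}(P_j,\Eu{A})$, completing the argument. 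The legitimacy of the split rests on two structural facts: the traversals are disjoint as index sets, so no location is double-counted, and $\chi_\cert$ restricted to $P_j$ is precisely the coloring that defines $\disc_{\chi_\cert}(P_j,\Eu{A})$ (since $\chi_\cert(p_{i,j}) = \chi(p_i)$ is well defined on each traversal).

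I expect the only real content to lie in the triangle-inequality step, and I would be explicit about what it costs: passing from one signed sum over $P_\cert$ to a sum of $k$ absolute values discards any cancellation between oppositely colored locations that originate from different traversals but fall in the same range. This is exactly why Lemma~\ref{lem:jointdisc} is an inequality rather than the equality of Lemma~\ref{lem:pcerttoexp}, and the bound $\max_j \disc_{\chi_\cert}(P_j,\Eu{A})$ should not be expected tight in general; its value is that it lets us construct good colorings of $P$ by independently controlling the discrepancy on each one-dimensional traversal $P_j$.
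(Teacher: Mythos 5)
Your proof is correct and follows essentially the same route as the paper: invoke Lemma~\ref{lem:pcerttoexp} to pass to $\frac{1}{k}\disc_{\chi_\cert}(P_\cert,\Eu{A})$, split the signed sum over the disjoint traversals $P_j$, apply the triangle inequality, and bound each term by $\max_j \disc_{\chi_\cert}(P_j,\Eu{A})$ so the factor of $k$ cancels. Your closing remark about where cancellation is discarded is a nice observation but not needed for the argument.
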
 
\begin{proof}
For any $r \in \Eu{A}$ and any coloring $\chi$ (and the corresponding $\chi_\cert$), we can write $P$ as a union of disjoint transversals $P_j$ to obtain 
\begin{align*}
 \disc_{\chi_\cert}(P_\cert,r) 
 &=
\bigg|\sum_{j=1}^k \sum_{p_{ij} \in P_j \cap r} \chi_\cert(p_{ij}) \bigg|  
\leq 
\sum_{j=1}^k \bigg| \sum_{p_{ij} \in P_j \cap r} \chi_\cert(p_{ij}) \bigg| 
\\ & \leq
 \sum_{j=1}^k \disc_{\chi_\cert} (P_j, r) 
\leq
k \max_j \disc_{\chi_\cert}(P_j,r).	  
\vspace{-3mm}
\end{align*}
Since this holds for every $r \in \Eu{A}$, hence (using Lemma \ref{lem:pcerttoexp})
\[
\REd_\chi(P,\Eu{A})  = \frac{1}{k} \disc_{\chi_\cert}(P_\cert,\Eu{A}) \leq \max_j \disc_{\chi_\cert}(P_j, \Eu{A}). \qedhere
\]
\end{proof}

\subsection{$\eps$-\RE Coresets in $\bl{R}^1$}
\label{subsec:1dre}
\begin{lemma}\label{lowdis}
Consider uncertain point set $P$ with $P_\cert \subset \bl{R}^1$ and the range space $(P_\cert, \Eu{I}_+)$ with ranges defined by one-sided intervals of the form $(-\infty, x]$, then $\REd(P,\Eu{I}) = O(\sqrt{k} \log n)$.
 \end{lemma}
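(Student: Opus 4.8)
The plan is to reduce the statement to a pure \emph{simultaneous permutation discrepancy} problem and then attack it with the partial-coloring (entropy) method. Since $\REd(P,\Eu{I}_+) = \min_\chi \REd_\chi(P,\Eu{I}_+)$, it suffices to exhibit a single coloring $\chi : P \to \{-1,+1\}$ that is good. By Lemma~\ref{lem:jointdisc} we have $\REd_\chi(P,\Eu{I}_+) \leq \max_j \disc_{\chi_\cert}(P_j,\Eu{I}_+)$, and because each canonical traversal $P_j$ lies in $\bl{R}^1$, its one-sided interval discrepancy is exactly the prefix discrepancy of the sorting permutation $\sigma_j$: namely $\disc_{\chi_\cert}(P_j,\Eu{I}_+) = \max_x \big|\sum_{i : \sigma_j(p_i) \leq x} \chi(p_i)\big|$. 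So the goal becomes: find one coloring of the $n$ uncertain points whose prefix sums are simultaneously small along all $k$ permutations $\sigma_1,\dots,\sigma_k$, and bound that common value by $O(\sqrt{k}\log n)$. For a \emph{single} permutation this is trivial (pair up consecutive elements and color each pair $+,-$, giving discrepancy $1$); the entire difficulty is that the \emph{same} coloring must serve all $k$ sorted orders at once, and these orders conflict.

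First I would record that naive strategies fail in the right way, to motivate the tool: a uniformly random coloring makes some prefix sum $\Theta(\sqrt{n})$, which is useless, so we need a partial-coloring argument that exploits the nested (laminar) structure of prefixes. The construction I would use builds, for each permutation $\sigma_j$, the dyadic decomposition of $[n]$, so that every prefix $(-\infty,x]$ along $\sigma_j$ is a disjoint union of at most $\lceil \log n\rceil$ canonical intervals. The target is a coloring for which every canonical interval (over all $j$ and all scales) is balanced to $O(\sqrt{k})$. To produce it I would apply the entropy method over $O(\log n)$ rounds, halving the active point set each round: in a round with $m$ active points there are $km/2^s$ canonical intervals of size $2^s$, and assigning each a discrepancy budget $\Theta(\sqrt{k})$ keeps the entropy under control, since with the Gaussian tail $g(\lambda)\approx e^{-\lambda^2/2}$ the total $\sum_{s} (km/2^s)\, e^{-\Omega(k/2^s)}$ converges (substituting $v = k/2^s$ turns it into $m\sum_v v\,e^{-\Omega(v)} = O(m)$, below $m/5$ for a suitable constant). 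This yields in each round a partial coloring of at least half the active points balancing every canonical interval to $O(\sqrt{k})$, and iterating colors all of $P$.

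Combining the pieces, a prefix along $\sigma_j$ splits into $O(\log n)$ canonical intervals, each balanced to $O(\sqrt{k})$, so each $\disc_{\chi_\cert}(P_j,\Eu{I}_+) = O(\sqrt{k}\log n)$; feeding this back through Lemma~\ref{lem:jointdisc} gives $\REd(P,\Eu{I}_+) = O(\sqrt{k}\log n)$, as claimed. The main obstacle is exactly the simultaneity: because one coloring must discharge all $k$ permutations, the per-interval budget cannot be $O(1)$ as in the single-permutation case but is forced up to $\Theta(\sqrt{k})$, and this is what the entropy/partial-coloring machinery (rather than an explicit pairing) is needed to certify. A secondary but genuinely delicate point is the polylogarithmic bookkeeping: composing $O(\log n)$ partial-coloring rounds with the $O(\log n)$-term prefix-to-canonical-interval decomposition naively costs two logarithmic factors, so obtaining the stated single power of $\log n$ requires charging the accumulated discrepancy carefully against scale (so that the round and decomposition logs are not multiplied), which is precisely where the connection to sharp discrepancy-of-permutations bounds does the work.
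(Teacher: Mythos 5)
Your proposal follows the same route as the paper: reduce via Lemma~\ref{lem:jointdisc} to bounding $\max_j \disc_{\chi_\cert}(P_j,\Eu{I}_+)$, i.e., the simultaneous prefix discrepancy of the $k$ canonical permutations, and then invoke the $O(\sqrt{k}\log n)$ bound for the discrepancy of a family of $k$ permutations. The only real difference is that the paper treats that bound as a black box, citing Spencer et al.~\cite{spencer}, whereas you sketch its entropy-method proof. Your sketch is sound in outline (dyadic decomposition of each prefix into $O(\log n)$ canonical intervals, a per-interval budget of $\Theta(\sqrt{k})$, and a convergent entropy sum $\sum_s (km/2^s)e^{-\Omega(k/2^s)} = O(m)$), but as written it composes $O(\log n)$ partial-coloring rounds with an $O(\log n)$-term decomposition and therefore only delivers $O(\sqrt{k}\log^2 n)$; you correctly flag that recovering the single power of $\log n$ needs a more careful charging of accumulated discrepancy against scale, but you do not carry that step out and instead defer to the sharp known bound --- which is, in effect, exactly what the paper does by citation. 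So both arguments ultimately rest on the same external result; yours simply makes part of its internals explicit, at the cost of leaving the one genuinely delicate step of that internal argument unproved.
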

\begin{proof}
 Spencer et. al.~\cite{spencer} show that $\disc((P, \Sigma_k),\Eu{I}_k)$ is $O(\sqrt{k} \log n)$.   Since we obtain the $\Sigma_k$ from the canonical transversals $P_1$ through $P_k$, by definition this results in upper bounds on the the discrepancy over all $P_j$ (it bounds the max). Lemma \ref{lem:jointdisc} then gives us the bound on $\REd(P,\Eu{I})$.
\end{proof} 

As we discussed in Appendix \ref{app:MR} the low \RE-discrepancy coloring can be iterated in a merge-reduce framework as developed by Chazelle and Matousek~\cite{CM96}.   With Theorem \ref{thm:RE-disc2samp} we can prove the following theorem.  

\begin{theorem}\label{thm:1deps-RE}
Consider uncertain point set $P$ and range space $(P_\cert, \Eu{I}_+)$ with ranges defined by one-sided intervals of the form $(-\infty, x]$, 
then an $\eps$-\RE coreset can be constructed of size $O((\sqrt{k}/\eps) \log (k/\eps))$.
 \end{theorem}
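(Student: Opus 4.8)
The plan is to combine the low \RE-discrepancy coloring of Lemma~\ref{lowdis} with the iterated halving framework that turns a low-discrepancy coloring into a small $\eps$-sample, here adapted to \RE-discrepancy via Theorem~\ref{thm:RE-disc2samp}. First I would read off the parameters: Lemma~\ref{lowdis} gives $\REd(P,\Eu{I}_+) = O(\sqrt{k}\log n)$, which matches the template $\gamma \log^\omega(n)$ with $\gamma = O(\sqrt{k})$ and $\omega = 1$. By Lemma~\ref{lem:REcolor}, a single such coloring $\chi$ (balanced so that $|P_\chi^+| = n/2$) already yields a half-size subset $P_\chi^+$ that is an $\eps$-\RE coreset with error only $O((\sqrt{k}/n)\log n)$ --- tiny when $n$ is large, though the subset is still of size $n/2$.

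Next I would iterate. Following the merge-reduce construction of Chazelle and Matou\v{s}ek~\cite{CM96} described in Appendix~\ref{app:MR}, I repeatedly recompute a balanced low-\RE-discrepancy coloring on the current point set and discard one color class, halving the size each round. Because $\sqrt{k}$ does not depend on $n$, the bound of Lemma~\ref{lowdis} continues to apply at every round, with the current cardinality playing the role of $n$. The per-round coreset errors form a geometric series dominated by its final (smallest) term: when the surviving set has size $m$, that round contributes error $O((\sqrt{k}/m)\log m)$, so stopping once this reaches $\eps$ forces a set of size $m = t$ with $t = O((\sqrt{k}/\eps)\log(\sqrt{k}/\eps))$. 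Since $\log(\sqrt{k}/\eps) = O(\log(k/\eps))$, this is exactly the claimed size $O((\sqrt{k}/\eps)\log(k/\eps))$. This is precisely the statement packaged by Theorem~\ref{thm:RE-disc2samp}, into which Lemma~\ref{lowdis} plugs directly.

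The step I expect to require the most care is verifying that the \RE-discrepancy reduction behaves like the classical discrepancy-to-$\eps$-sample reduction across iterations. The subtlety is that the coloring lives on $P$ while the discrepancy is measured on $P_\cert$; Lemmas~\ref{lem:pcerttoexp} and~\ref{lem:jointdisc} guarantee that a coloring controlling the per-transversal discrepancy $\disc_{\chi_\cert}(P_j,\Eu{A})$ also controls $\REd_\chi(P,\Eu{A})$, and one must confirm this relationship is preserved after each reduction --- that is, that the halved uncertain point set again admits the Spencer-type bound of~\cite{spencer} on its own canonical permutations, with $n$ replaced by the current size. Once this invariant is in place, the error accumulation and final size follow from the standard geometric-series argument, and no estimates beyond Lemma~\ref{lowdis} are needed.
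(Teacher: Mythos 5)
Your proposal is correct and follows essentially the same route as the paper: invoke Lemma~\ref{lowdis} to get $\REd(P,\Eu{I}_+) = O(\sqrt{k}\log n)$ (so $\gamma = O(\sqrt{k})$, $\omega = 1$), then apply the iterated-halving machinery of Theorem~\ref{thm:RE-disc2samp} to obtain a coreset of size $O((\gamma/\eps)\log^\omega(\gamma/\eps)) = O((\sqrt{k}/\eps)\log(k/\eps))$. The invariant you flag --- that the halved uncertain set again satisfies the Spencer-type bound on its own canonical permutations --- is exactly what the appendix establishes by substituting Lemma~\ref{lem:REcolor} for the classical single-round lemma, so no gap remains.
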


Since expected value is linear, \RE-$\disc_\chi(P,(-\infty,x]) -$ \RE-$\disc_\chi(P,(-\infty,y)) = $ \RE-$\disc_\chi(P,[y,x])$ for $y<x$ and the above result also holds for the family of two-sided ranges $\Eu{I}$.

\subsection{$\eps$-\RE Coresets for Rectangles in $\bl{R}^d$}
\label{subsec:highdre}
Here let $P$ be a set of $n$ uncertain points where each possible location of a point $p_{i,j} \in \bl{R}^d$.  We consider a range space $(P_\cert, \Eu{R}_d)$ defined by $d$-dimensional axis-aligned rectangles.  

Each canonical transversal $P_j$ for $j \in [k]$ no longer implies a unique permutation on the points (for $d>1$).  But, for any rectangle $r \in \Eu{R}$, we can represent any $r \cap P_j$  as the disjoint union of points $P_j$ contained in intervals on a predefined set of ${(1 + \log n )}^{d-1}$ permutations~\cite{bohus}. 
Spencer \etal~\cite{spencer} showed there exists a coloring $\chi$ such that 
\[
\max_j \disc_{\chi}(P_j, \Eu{R}) = O(D_\ell(n) \log ^{d-1} n),
\] 
where $\ell ={(1 + \log n )}^{d-1}$ is the number of defined permutations and $D_\ell(n)$ is the discrepancy of $\ell$ permutations over $n$ points and ranges defined as intervals on each permutation.  Furthermore, they showed $D_\ell(n) = O(\sqrt{\ell} \log n)$.

To get the $\RE$-discrepancy bound for $P_{\cert} = \cup_{j=1}^k P_j$, we first decompose $P_{\cert}$ into the $k$ point sets $P_j$ of size $n$.
We then obtain ${(1 + \log n )}^{d-1}$ permutations over points in each $P_j$, and hence obtain a family $\Sigma_\ell$ of $\ell = {k (1 + \log n )}^{d-1}$ permutations over all $P_j$. $D_\ell(n) = O(\sqrt{\ell} \log n)$ yields 
\[
\disc((P, \Sigma_\ell), \Eu{I}_\ell) =  O(\sqrt{k} \log^{\frac{d+1}{2}} n).
\]
Now each set $P_j \cap r$ for $r \in \Eu{R}_d$, can be written as the disjoint union of $O{( \log^{d-1} n)}$ intervals of $\Sigma_\ell$.  Summing up over each interval, we get that 
$\disc(P_j, \Eu{R}) = O(\sqrt{k} \log^{\frac{3d-1}{2}} n)$ for each $j$.  
By Lemma \ref{lem:jointdisc} this bounds the $\RE$-discrepancy as well.
Finally, we can again apply the merge-reduce framework of Chazelle and Matousek~\cite{CM96} (via Theorem \ref{thm:RE-disc2samp}) to achieve an $\eps$-\RE coreset.

\begin{theorem}\label{thm:highdeps-RE}
Consider uncertain point set $P$ and range space $(P_\cert, \Eu{R}_d)$ (for $d>1$) with ranges defined by axis-aligned rectangles in $\bl{R}^d$. 
Then an $\eps$-\RE coreset can be constructed of size $O((\sqrt{k}/\eps) \log^{\frac{3d-1}{2}} (k/\eps))$.
 \end{theorem}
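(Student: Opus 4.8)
The plan is to reduce the construction of an $\eps$-\RE coreset to finding one low-discrepancy coloring of the $n$ uncertain points, and then to feed that coloring into the merge-reduce machinery of Theorem \ref{thm:RE-disc2samp}. By Lemma \ref{lem:jointdisc} it suffices to exhibit a coloring $\chi : P \to \{-1,+1\}$ whose induced coloring $\chi_\cert$ has small discrepancy $\max_j \disc_{\chi_\cert}(P_j, \Eu{R}_d)$ on every canonical transversal $P_j$ against axis-aligned rectangles. The crucial constraint is that $\chi$ lives on the uncertain points themselves, so all $k$ copies $p_{i,1}, \ldots, p_{i,k}$ of a point $p_i$ must receive the same color; this is precisely what prevents us from applying off-the-shelf geometric discrepancy bounds to $P_\cert$ directly and forces the permutation-discrepancy route.

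First I would reduce rectangle ranges to interval ranges on permutations. For a fixed transversal $P_j$ the points lie in $\bl{R}^d$, so a single sorted order no longer captures all rectangles; instead I invoke the decomposition of \cite{bohus}, which writes $r \cap P_j$, for any $r \in \Eu{R}_d$, as a disjoint union of $O(\log^{d-1} n)$ intervals drawn from a fixed set of ${(1+\log n)}^{d-1}$ permutations of $P_j$. Collecting these over all $j \in [k]$ produces a single combined family $\Sigma_\ell$ of $\ell = k {(1+\log n)}^{d-1}$ permutations over the $n$ uncertain points.

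Next I would apply the permutation-discrepancy theorem of Spencer \etal~\cite{spencer}, which yields a single coloring of the $n$ objects with discrepancy $D_\ell(n) = O(\sqrt{\ell}\, \log n)$ simultaneously across all $\ell$ permutations; since the coloring acts on the objects, it is automatically consistent across transversals as required. Substituting $\ell = k {(1+\log n)}^{d-1}$ gives interval discrepancy $O(\sqrt{k}\, \log^{\frac{d+1}{2}} n)$ on $\Sigma_\ell$. Because each $r \cap P_j$ is a union of $O(\log^{d-1} n)$ such intervals, summing their discrepancies gives $\disc_{\chi_\cert}(P_j, \Eu{R}_d) = O(\sqrt{k}\, \log^{\frac{3d-1}{2}} n)$ for every $j$, and Lemma \ref{lem:jointdisc} promotes this to $\REd_\chi(P, \Eu{R}_d) = O(\sqrt{k}\, \log^{\frac{3d-1}{2}} n)$. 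Reading off $\gamma = O(\sqrt{k})$ and $\omega = \frac{3d-1}{2}$ and invoking the reduction of Theorem \ref{thm:RE-disc2samp} (via the merge-reduce framework of \cite{CM96}) then converts this coloring into an $\eps$-\RE coreset of size $O((\sqrt{k}/\eps)\, \log^{\frac{3d-1}{2}}(\sqrt{k}/\eps)) = O((\sqrt{k}/\eps)\, \log^{\frac{3d-1}{2}}(k/\eps))$, since $\sqrt{k}$ can be absorbed into $k$ inside the polylogarithm.

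The main obstacle I anticipate is not the closing arithmetic but the interaction between the rectangle-to-interval decomposition and the permutation bound. I must make sure that all $O(\log^{d-1} n)$ intervals arising from a single rectangle can be charged against permutations in the \emph{same} combined family $\Sigma_\ell$, so that one coloring simultaneously controls every interval, and I must track the $\sqrt{\ell}$ factor with care: naively treating the $k$ transversals and the ${(1+\log n)}^{d-1}$ rectangle-induced permutations as separate discrepancy problems would yield a linear $k$ dependence instead of the $\sqrt{k}$ that is the entire advantage of the discrepancy approach over random sampling.
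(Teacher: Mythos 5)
Your proposal is correct and follows essentially the same route as the paper: the Bohus decomposition of each $r \cap P_j$ into $O(\log^{d-1} n)$ intervals over ${(1+\log n)}^{d-1}$ permutations, the pooling of all $k$ transversals' permutations into a single family of $\ell = k{(1+\log n)}^{d-1}$ permutations so that the Spencer~\etal bound $D_\ell(n) = O(\sqrt{\ell}\log n)$ yields the crucial $\sqrt{k}$ (rather than $k$) dependence, followed by summation over the intervals, Lemma~\ref{lem:jointdisc}, and Theorem~\ref{thm:RE-disc2samp}. The obstacle you flag at the end --- that all intervals must be charged against one combined family controlled by a single coloring --- is exactly the point the paper's argument hinges on, and you resolve it the same way.
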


\section{\RC Coresets}
\label{sec:RC}

Recall that an $\eps$-\RC coreset $T$ of a set $P$ of $n$ uncertain points satisfies that for all queries $r \in \Eu{A}$ and all thresholds $\tau \in [0,1]$ we have $|G_{P,r}(\tau) - G_{T,r}(\tau)| \leq \eps$, where $G_{P,r}(\tau)$ represents the fraction of points from $P$ that are in range $r$ with probability at least $\tau$.  

In this setting, given a range $r \in \Eu{A}$ and a threshold $\tau \in [0,1]$ we can let the pair $(r, \tau) \in \Eu{A} \times [0,1]$ define a range $R_{r,\tau}$ such that each $p_i \in P$ is either in or not in $R_{r,\tau}$.  Let $(P, \Eu{A} \times [0,1])$ denote this range space.   If $(P_\cert, \Eu{A})$ has VC-dimension $\nu$, then $(P, \Eu{A} \times [0,1])$ has VC-dimension $O(\nu +1)$;  see Corollary 5.23 in \cite{peled}.  This implies that random sampling works to construct $\eps$-RC coresets.  

\begin{theorem}
\label{thm:RC-sample}
For uncertain point set $P$ and range space $(P_\cert, \Eu{A})$ with VC-dimension $\nu$, a random sample $T \subset P$ of size $O((1/\eps^2) (\nu + \log(1/\delta)))$ is an $\eps$-RC coreset of $(P,\Eu{A})$ with probability at least $1-\delta$.  
\end{theorem}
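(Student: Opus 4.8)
The plan is to reduce this to a standard $\eps$-sample problem on an ordinary (certain) range space and then invoke the classical VC-dimension sampling bound, exactly as the preceding paragraph suggests. The crucial structural point is that \RC queries decouple across points: whether $p_i$ satisfies $\Pr_{Q \Subset P}[q_i \in r] \geq \tau$ depends only on the distribution of $p_i$ together with the pair $(r,\tau)$, and never on the other points of $P$. Hence each pair $(r,\tau)$ determines a fixed subset $R_{r,\tau} = \{p_i \in P \mid \Pr_{Q \Subset P}[q_i \in r] \geq \tau\}$, whose intersection with any $T \subset P$ is just $T \cap R_{r,\tau}$. The family $\Eu{B} = \{R_{r,\tau} : r \in \Eu{A},\ \tau \in [0,1]\}$ therefore makes $(P, \Eu{A} \times [0,1])$ a genuine range space over the certain ground set of $n$ uncertain points.

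First I would check that an $\eps$-\RC coreset is \emph{precisely} an $\eps$-sample of this lifted range space. Since $G_{P,r}(\tau) = |P \cap R_{r,\tau}|/|P|$ and $G_{T,r}(\tau) = |T \cap R_{r,\tau}|/|T|$, demanding $|G_{P,r}(\tau) - G_{T,r}(\tau)| \leq \eps$ for all $r \in \Eu{A}$ and all $\tau \in [0,1]$ is word-for-word the condition $\max_{R \in \Eu{B}} \left| \frac{|P \cap R|}{|P|} - \frac{|T \cap R|}{|T|} \right| \leq \eps$ defining an $\eps$-sample of $(P, \Eu{A} \times [0,1])$.

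Next I would bound the complexity of the lifted space and apply the sampling theorem. Augmenting $(P_\cert, \Eu{A})$ with the single threshold parameter $\tau$ raises the VC-dimension only to $O(\nu+1)$, by Corollary 5.23 of~\cite{peled}. A random sample of size $O((1/\eps^2)(\nu' + \log(1/\delta)))$ is an $\eps$-sample of a range space of VC-dimension $\nu'$ with probability at least $1-\delta$~\cite{LLS01}; applying this with $\nu' = O(\nu+1) = O(\nu)$ to the lifted space and using the equivalence above gives an $\eps$-\RC coreset of the claimed size.

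The main obstacle is the VC-dimension bound for the lifted space: it is the threshold $\tau$ that forces this step, and it is why we appeal to Corollary 5.23 of~\cite{peled} instead of arguing directly on $(P_\cert,\Eu{A})$. I would also highlight what the reduction buys compared with the \RE case of Theorem~\ref{thm:RE-samp}: the threshold membership is folded into a \emph{single} combined range space over the $n$ uncertain points, so one application of the sampling bound suffices and no union bound over the $k$ locations is required---this is exactly why the sample size carries a $\log(1/\delta)$ rather than a $\log(k/\delta)$ term.
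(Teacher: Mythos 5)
Your proposal is correct and follows essentially the same route as the paper: lift $(r,\tau)$ to a range $R_{r,\tau}$ over the uncertain points, observe that an $\eps$-\RC coreset is exactly an $\eps$-sample of the resulting range space $(P, \Eu{A} \times [0,1])$ of VC-dimension $O(\nu+1)$ (Corollary 5.23 of~\cite{peled}), and apply the standard random-sampling bound. The only difference is that you spell out the equivalence and the absence of a union bound over $k$ more explicitly than the paper does.
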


Yang \etal propose a similar result~\cite{aggregate} as above, without proof.

\subsection{\RC Coresets in $\bl{R}^1$}
Constructing $\eps$-\RC coresets when the family of ranges $\Eu{I}_+$ represents one-sided, one-dimensional intervals is much easier than other cases.  It relies heavily on the ordered structure of the canonical permutations, and thus discrepancy results do not need to decompose and then re-compose the ranges.

\begin{lemma} \label{lem:rc-query}
A point $p_i \in P$ is in range $r \in \Eu{I}_+$ with probability at least $\tau = t/k$ if and only if $ p_{i,t} \in r \cap P_t$.
\end{lemma}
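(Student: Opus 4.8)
The plan is to unwind the definition of the membership probability $\Pr_{Q \Subset P}[q_i \in r]$ into a simple counting statement over the $k$ locations of $p_i$, and then exploit the monotone (one-sided) structure of the ranges in $\Eu{I}_+$ to reduce that count to a single comparison. First I would recall from the setup of Lemma~\ref{lem:certtocomp} that each location is equally likely, $\Pr[p_i = p_{i,j}] = 1/k$ for all $j \in [k]$. Since the $k$ candidate positions are mutually exclusive, this gives $\Pr_{Q \Subset P}[q_i \in r] = \frac{1}{k} |\{j \in [k] : p_{i,j} \in r\}|$, so the threshold condition $\Pr_{Q \Subset P}[q_i \in r] \geq \tau = t/k$ is equivalent to the integer condition that at least $t$ of the locations $p_{i,1}, \ldots, p_{i,k}$ lie in $r$.

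Next I would invoke the one-sided shape of $r \in \Eu{I}_+$, writing $r = (-\infty, x]$, so that $p_{i,j} \in r$ simply means $p_{i,j} \leq x$. Here I rely on the convention (taken without loss of generality in $\bl{R}^1$ by relabeling the $k$ locations of each point, and consistent with the layout in the Simple Example) that each uncertain point's locations are indexed in sorted order, $p_{i,1} \leq p_{i,2} \leq \cdots \leq p_{i,k}$; this is exactly the indexing that makes the $t$th canonical transversal $P_t$ collect the $t$th-smallest location of every point. Under this ordering, ``at least $t$ of the locations are $\leq x$'' holds precisely when the $t$th-smallest satisfies $p_{i,t} \leq x$, i.e.\ $p_{i,t} \in r$. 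The forward direction is immediate (if $p_{i,t} \leq x$ then $p_{i,1}, \ldots, p_{i,t}$ are all $\leq x$, giving at least $t$), and the reverse follows since the $t$th-smallest of a set having at least $t$ elements below $x$ is itself below $x$.

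Finally, since $p_{i,t} \in P_t$ by the definition $P_t = \bigcup_{i} p_{i,t}$, the condition $p_{i,t} \in r$ coincides with $p_{i,t} \in r \cap P_t$, which closes the equivalence.

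The main obstacle --- really the only place any geometry enters --- is the step that collapses the cardinality constraint (``at least $t$ locations below $x$'') into a single point test (``$p_{i,t}$ below $x$''). This hinges entirely on $r$ being one-sided: for a two-sided interval or a higher-dimensional range the locations inside $r$ need not form a prefix of the sorted order, so no single indexed location $p_{i,t}$ can certify the threshold. That is precisely why the $\bl{R}^1$ one-sided case is, as remarked, much easier than the others.
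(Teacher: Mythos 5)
Your proposal is correct and follows essentially the same route as the paper: the paper's (much terser) proof likewise assumes the locations of each uncertain point are indexed in sorted order ($p_{i,j} < p_{i,j+1}$) and observes that for a one-sided range, $p_{i,t} \in r$ forces all $p_{i,j}$ with $j \leq t$ into $r$ while $p_{i,t} \notin r$ excludes all $p_{i,j}$ with $j \geq t$. Your write-up merely makes explicit the step the paper leaves implicit, namely that the threshold condition $\Pr[q_i \in r] \geq t/k$ is equivalent to at least $t$ of the $k$ equally likely locations lying in $r$.
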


\begin{proof}
By the canonical permutations, since for all $i \in [n]$, we require $p_{i,j} < p_{i,j+1}$, then if $p_{i,t} \in r$, it follows that $p_{i,j} \in r$ for $j \leq t$.  
Similarly if $p_{i,t} \notin r$, then all $p_{i,j} \notin r$ for $j \geq t$.  
\end{proof}

Thus when each canonical permutation is represented upto an error $\eps$ by a coreset $T$, then each threshold $\tau$ is represented within $\eps$.  Hence, as with $\eps$-\RE coresets, we invoke the low-discrepancy coloring of Bohus~\cite{bohus} and Spencer \etal~\cite{spencer}, and then iterate them (invoking Theorem \ref{thm:disc2samp}) to achieve a small size $\eps$-\RC coreset.

\begin{theorem}\label{thm:1deps-RC}
For uncertain point set $P$ and range space $(P_\cert, \Eu{I}_+)$ with ranges defined by one-sided intervals of the form $(-\infty, a]$.  
An $\eps$-\RC coreset of $(P,\Eu{I}_+)$ can be constructed of size $O((\sqrt{k}/\eps) \log (k/\eps))$.
\end{theorem}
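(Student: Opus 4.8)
The plan is to reduce the $\eps$-\RC coreset requirement to simultaneously producing an $\eps$-sample of each canonical transversal $(P_t, \Eu{I}_+)$, exactly mirroring the structure used for $\eps$-\RE coresets in Section~\ref{subsec:1dre}, and then to realize all $k$ of these $\eps$-samples at once with a single low-discrepancy coloring of the canonical $k$-permutation.

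First I would use Lemma~\ref{lem:rc-query} to rewrite $G_{P,r}$. Since each location $p_{i,j}$ occurs with probability $1/k$ and the canonical ordering forces $p_{i,1} < \cdots < p_{i,k}$, the probability $\Pr_{Q \Subset P}[q_i \in r]$ is a multiple of $1/k$, and the event $\Pr_{Q \Subset P}[q_i \in r] \geq t/k$ holds exactly when $p_{i,t} \in r$. Hence $G_{P,r}(t/k) = |P_t \cap r|/n$ and, since a point's distribution is unaffected by restriction to $T$, $G_{T,r}(t/k) = |T_t \cap r|/|T|$, where $T_t$ is the $t$th canonical transversal of $T$. The $\eps$-\RC condition therefore becomes
\[
\max_{r \in \Eu{I}_+} \; \max_{t \in [k]} \left| \frac{|P_t \cap r|}{n} - \frac{|T_t \cap r|}{|T|} \right| \leq \eps,
\]
that is, $T_t$ is an $\eps$-sample of $(P_t, \Eu{I}_+)$ for every $t \in [k]$. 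I would then check that controlling only these $k$ grid thresholds suffices: because every membership probability (for points of $P$ and of $T$ alike) lies in $\{0, 1/k, \ldots, 1\}$, both $G_{P,r}$ and $G_{T,r}$ are step functions that jump only at multiples of $1/k$, so for $\tau \in (t/k, (t+1)/k]$ we have $G_{P,r}(\tau) = G_{P,r}((t+1)/k)$ and $G_{T,r}(\tau) = G_{T,r}((t+1)/k)$; the error at any $\tau$ thus equals the error at a grid point and the displayed bound controls all $\tau \in [0,1]$.

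It remains to build a single $T$ that is an $\eps$-sample of all $k$ transversals at once. This is precisely the guarantee supplied by the permutation-discrepancy coloring: invoking Bohus~\cite{bohus} and Spencer \etal~\cite{spencer} on the canonical $k$-permutation $(P, \Sigma_k)$ gives a coloring $\chi$ with $\max_t \disc_\chi(P_t, \Eu{I}_+) = O(\sqrt{k} \log n)$. Discarding one color class halves $P$ while keeping every transversal balanced, and iterating this in the Chazelle and Matousek merge-reduce framework~\cite{CM96} (via Theorem~\ref{thm:disc2samp}) converts the discrepancy bound $\gamma \log^\omega n$ with $\gamma = O(\sqrt{k})$ and $\omega = 1$ into simultaneous $\eps$-samples of every $(P_t, \Eu{I}_+)$ of size $O((\sqrt{k}/\eps) \log(\sqrt{k}/\eps)) = O((\sqrt{k}/\eps) \log(k/\eps))$.

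The main obstacle is not any single calculation but the demand that one subset $T$ serve as an $\eps$-sample for all $k$ transversals simultaneously; a naive transversal-by-transversal union (as in Lemma~\ref{lem:jointdisc}) would cost a factor of $k$ rather than $\sqrt{k}$. The reason for using the permutation discrepancy of~\cite{spencer} rather than independent colorings per transversal is exactly to secure this $\sqrt{k}$ savings, and the ordered structure guaranteed by the canonical permutation (so that one-sidedness yields the clean equivalence in Lemma~\ref{lem:rc-query}) is what makes all $k$ transversals amenable to a single coloring.
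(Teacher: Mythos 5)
Your proposal is correct and matches the paper's argument: Lemma~\ref{lem:rc-query} reduces the \RC condition at each grid threshold $t/k$ to an $\eps$-sample condition on the $t$th canonical transversal, and the Spencer \etal permutation-discrepancy coloring of $(P,\Sigma_k)$ iterated via Theorem~\ref{thm:disc2samp} realizes all $k$ of these simultaneously with the $\sqrt{k}$ rather than $k$ dependence. Your explicit verification that non-grid thresholds $\tau$ reduce to grid thresholds (since all membership probabilities are multiples of $1/k$) is a detail the paper leaves implicit, but it is the same proof.
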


Extending Lemma \ref{lem:rc-query} from one-sided intervals of the form $[-\infty, a] \in \Eu{I}_+$ to intervals of the form $[a, b] \in \Eu{I}$ turns out to be non-trivial.  It is \emph{not} true that $G_{P,[a,b]}(\tau) = G_{P,[-\infty,b]}(\tau) - G_{P,[-\infty,a]}(\tau)$, hence the two queries cannot simply be subtracted.  Also, while the set of points corresponding to the
query $G_{P,[-\infty, a]}(\frac{t}{k})$ are a contiguous interval in the $t$th permutation we construct in Lemma \ref{lem:rc-query}, the same need not be true of points corresponding to $G_{P,[a,b]}(\frac{t}{k})$.
This is a similar difficulty in spirit as noted by Kaplan \etal~\cite{colors} in the problem of counting the number of points of distinct colors in a box where one cannot take a naive decomposition and add up the numbers returned by each subproblem. 

We give now a construction to solve this two-sided problem for uncertain points in $\bl{R}^1$ inspired by that of Kaplan \etal~\cite{colors}, but we require specifying a fixed value of $t \in [k]$.  Given an uncertain point $p_i \in P$ assume w.l.o.g that $p_{i,j} < p_{i,j+1}$.  Also pretend there is a point $p_{i,k+1} = \eta$ where $\eta$ is larger than any $b \in \bl{R}^1$ from a query range $[a,b]$ (essentially $\eta = \infty$).  
Given a range $[a,b]$, we consider the right-most set of $t$ locations of $p_i$ (here $\{p_{i,j-t}, \ldots, p_{i,j}\}$) that are in the range.  This satisfies 
(i) $p_{i,j - t} \geq a$, 
(ii) $p_{i,j} \leq b$, and 
(iii) to ensure that it is the right-most such set, $p_{i,j+1} > b$.  

To satisfy these three constraints we re-pose the problem in $\bl{R}^3$ to designate each contiguous set of $t$ possible locations of $p_i$ as a single point.  So for $t < j \leq k$, we map $p_{i,j}$ to $\bar{p}_{i,j}^t = (p_{i,j-t}, p_{i,j}, p_{i,j+1})$.  
Correspondingly, a range $r = [a,b]$ is mapped to a range $\bar{r} = [a, \infty) \times (-\infty,b] \times (b,\infty)$;  
see Figure \ref{fig:queryreduction}.
Let $\bar p_i^t$ denote the set of all $\bar p^t_{i,j}$, and let $\bar{P}^t$ represent $\bigcup_i \bar{p}_i^t$.  
 
\begin{figure}[t]
\centering
\includegraphics[width=0.8\linewidth]{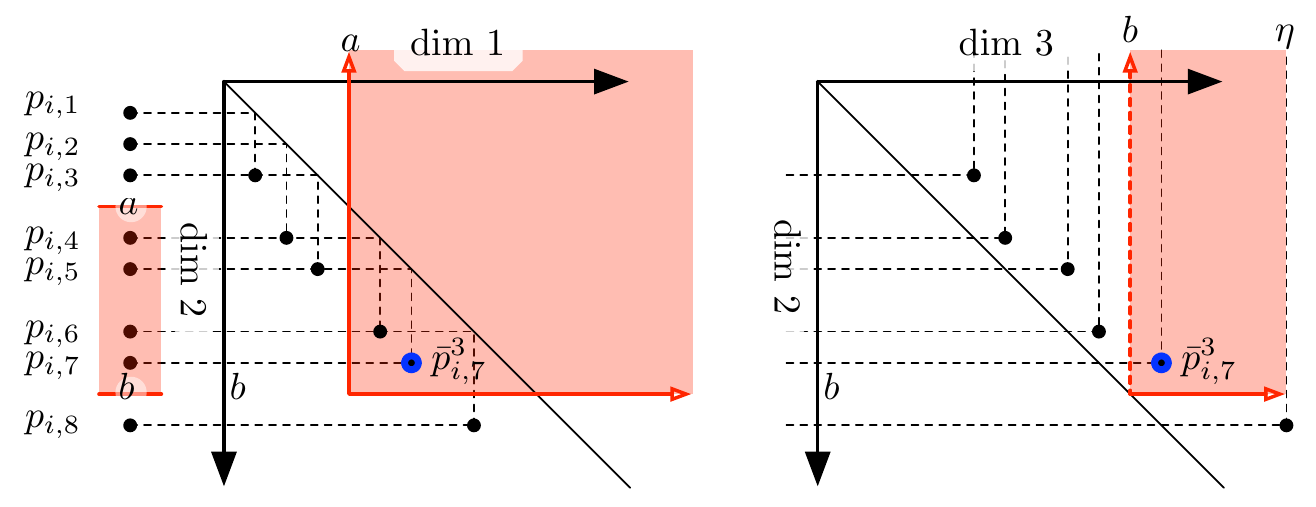}
\vspace{-4mm}
\caption{\small Uncertain point $p_i$ queried by range $[a,b]$.  Lifting shown to $\bar{p}_i^3$ along dimensions 1 and 2 (left) and along dimensions 2 and 3 (right).
\vspace{-4mm}
}  

\label{fig:queryreduction}
\end{figure} 
 
\begin{lemma}
$p_i$ is in interval $r = [a,b]$ with threshold at least $t/k$ if and only if $\bar{p}^t_i \cap \bar{r}^t \geq 1$.
Furthermore, no two points $p_{i,j}, p_{i,j'} \in p_i$ can map to points $\bar{p}^t_{i,j}, \bar{p}^t_{i,j'}$ such that both are in a range $\bar{r}^t$.  
\end{lemma}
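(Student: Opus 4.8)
The plan is to derive both statements from a single structural observation about how the lifted range $\bar r^t = [a,\infty) \times (-\infty,b] \times (b,\infty)$ meets the lifted points $\bar p_{i,j}^t = (p_{i,j-t}, p_{i,j}, p_{i,j+1})$. By construction, $\bar p_{i,j}^t \in \bar r^t$ is exactly the conjunction of the three constraints (i) $p_{i,j-t} \geq a$, (ii) $p_{i,j} \leq b$, and (iii) $p_{i,j+1} > b$. The crux is that (ii) and (iii) together, using the strict monotonicity $p_{i,1} < p_{i,2} < \cdots < p_{i,k} < p_{i,k+1} = \eta$, already pin down a \emph{unique} index: $j$ is forced to be the largest index with $p_{i,j} \leq b$, i.e.\ the unique place where the sorted sequence crosses $b$. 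Every part of the lemma flows from isolating this ``critical index'' $j^\ast$.

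First I would prove the ``furthermore'' (uniqueness) clause, which is the easier half. Suppose $\bar p_{i,j}^t$ and $\bar p_{i,j'}^t$ both lie in $\bar r^t$. Then both $j$ and $j'$ satisfy $p_{i,\cdot} \leq b < p_{i,\cdot+1}$; since the values $p_{i,1}, \ldots, p_{i,k+1}$ are strictly increasing, there is exactly one such crossing index, forcing $j = j'$. This is precisely where the dummy location $p_{i,k+1} = \eta$ earns its place: it guarantees the crossing index exists even when every genuine location of $p_i$ lies at or below $b$, so the map is never undefined for a valid query.

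Next I would establish the equivalence. At the critical index $j^\ast$ selected by (ii)--(iii), sortedness makes the set of locations of $p_i$ lying in $[a,b]$ a contiguous block $p_{i,l}, p_{i,l+1}, \ldots, p_{i,j^\ast}$, of size $j^\ast - l + 1$. Since each location carries probability $1/k$, we have $\Pr[p_i \in [a,b]] \geq t/k$ iff $j^\ast - l + 1 \geq t$ iff $l \leq j^\ast - t + 1$ iff the location counted $t$ steps back from $p_{i,j^\ast}$ is still $\geq a$, which is exactly constraint (i). Hence $\bar p_i^t \cap \bar r^t \geq 1$ iff the critical index satisfies all three constraints iff the in-range block has length at least $t$ iff $\Pr[p_i \in [a,b]] \geq t/k$, giving the ``if and only if.'' In writing this out I would reconcile the index offset in the displayed map $\bar p_{i,j}^t = (p_{i,j-t}, p_{i,j}, p_{i,j+1})$ against this count, so that the coordinate enforcing (i) is the location lying $t$ positions before $p_{i,j^\ast}$ and the threshold comes out to $t/k$ rather than $(t \pm 1)/k$.

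The main obstacle I anticipate is purely bookkeeping at the boundaries rather than anything conceptual: fixing the exact range of indices $j$ on which $\bar p_{i,j}^t$ is defined (roughly $t < j \leq k$) so that every potentially critical index is covered; checking that when the in-range block runs off the left end (so $l = 1$) constraint (i) degenerates harmlessly; and assigning the correct strictness to each factor of $\bar r^t$ (closed on the $a$ and $b$ sides, half-open at $(b,\infty)$) so that (ii) and (iii) really do select a single crossing index rather than overlapping or missing one. Getting this offset and these open/closed conventions exactly right is the only delicate part; the underlying logic is the uniqueness of the $b$-crossing together with the contiguity of the in-range block.
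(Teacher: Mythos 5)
Your proof is correct and follows essentially the same route as the paper's: sortedness makes the in-range locations of $p_i$ a contiguous block, the last two coordinates of $\bar{r}^t$ pin down the unique crossing index $j$ with $p_{i,j} \leq b < p_{i,j+1}$ (which gives the uniqueness clause), and the first coordinate tests whether that block reaches back at least $t$ positions. The index offset you flag (the block $\{p_{i,j-t},\ldots,p_{i,j}\}$ nominally has $t+1$ elements) is inherited from the paper's own convention for the lifting map, so your plan to reconcile it is the right instinct but does not change the substance of the argument.
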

 
\begin{proof}
Since $p_{i,j} < p_{i,j+1}$, then if $p_{i,j-t} \geq a$ it implies all $p_{i,\ell} \geq a$ for $\ell \geq j-t$, and similarly, if $p_{i,j} \leq b$ then all $p_{i,\ell} \leq b$ for all $\ell \leq j$.  Hence if $\bar{p}^t_{i,j}$ satisfies the first two dimensional constraints of the range $\bar{r}^t$, it implies $t$ points $p_{i,j-t} \ldots, p_{i,j}$ are in the range $[a,b]$.  
Satisfying the constraint of $\bar{r}^t$ in the third coordinate indicates that $p_{i,j+1} \notin [a,b]$.  There can only be one point $p_{i,j}$ which satisfies the constraint of the last two coordinates that $p_{i,j} \leq b < p_{i,j+1}$.  And for any range which contains at least $t$ possible locations, there must be at least one such set (and only one) of $t$ consecutive points which has this satisfying $p_{i,j}$.  
\end{proof}

\begin{corollary}
Any uncertain point set $P \in \bl{R}^1$ of size $n$ and range $r = [a,b]$ has $G_{P, r}(\frac{t}{k}) = |\bar{P}^t \cap \bar{r}^t|/n$.  
\end{corollary}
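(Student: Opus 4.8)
The plan is to obtain this as an immediate consequence of the preceding lemma, so the argument is short. First I would unwind the definition of $G_{P,r}(\tau)$ at $\tau = t/k$. Since each location $p_{i,j}$ is taken with probability $1/k$, the event $q_i \in r$ has probability equal to $1/k$ times the number of locations $p_{i,j}$ lying in $r$; hence $\Pr_{Q \Subset P}[q_i \in r] \geq t/k$ holds exactly when at least $t$ of the $k$ locations of $p_i$ fall in $r = [a,b]$. Thus $n \cdot G_{P,r}(t/k)$ counts precisely the uncertain points $p_i$ that have at least $t$ locations in $r$.

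Next I would convert this count into a count of lifted points using the previous lemma. The lemma's \emph{if and only if} direction tells me that $p_i$ meets the threshold $t/k$ exactly when $|\bar{p}^t_i \cap \bar{r}^t| \geq 1$, while its \emph{furthermore} clause guarantees that at most one lifted point of $p_i$ lies in $\bar{r}^t$. Combining these two facts, the quantity $|\bar{p}^t_i \cap \bar{r}^t|$ behaves as a $0/1$ indicator: it equals $1$ when $p_i$ satisfies the threshold and $0$ otherwise. I would record this explicitly, since both directions of the biconditional together with the uniqueness clause are required to rule out simultaneously the cases ``satisfied but counted as $0$'' and ``counted as $\geq 2$''.

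Finally, because $\bar{P}^t = \bigcup_i \bar{p}^t_i$ is a union over the distinct uncertain points (counted with multiplicity, as each lifted point is labelled by its originating $p_i$), I would sum the per-point indicators to obtain
\[
|\bar{P}^t \cap \bar{r}^t| = \sum_{i=1}^n |\bar{p}^t_i \cap \bar{r}^t| = \big|\{p_i \in P \mid \Pr_{Q \Subset P}[q_i \in r] \geq t/k\}\big| = n \cdot G_{P,r}(t/k),
\]
and dividing through by $n = |P|$ yields the claim. There is no real obstacle here; the only point requiring care is that establishing the $0/1$-indicator behaviour genuinely consumes \emph{both} halves of the lemma's biconditional and its uniqueness addendum, rather than the \emph{if}-direction alone, and that the disjoint decomposition $\bar{P}^t = \bigcup_i \bar{p}^t_i$ lets the cardinality split additively over the uncertain points.
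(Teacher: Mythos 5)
Your argument is correct and is exactly the reasoning the paper leaves implicit: the corollary is stated as an immediate consequence of the preceding lemma, whose biconditional plus the uniqueness clause make $|\bar{p}^t_i \cap \bar{r}^t|$ a $0/1$ indicator for the event $\Pr_{Q \Subset P}[q_i \in r] \geq t/k$, and summing over $i$ gives $|\bar{P}^t \cap \bar{r}^t| = n \cdot G_{P,r}(t/k)$. Your explicit note that both directions of the lemma and its uniqueness addendum are needed is a fair and accurate elaboration, not a deviation.
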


This presents an alternative view of each uncertain point in $\bl{R}^1$ with $k$ possible locations as an uncertain point in $\bl{R}^3$ with $k-t$ possible locations (since for now we only consider a threshold $\tau = t/k$).  Where $\Eu{I}$ represents the family of ranges defined by two-sided intervals, let $\bar{\Eu{I}}$ be the corresponding family of ranges in $\bl{R}^3$ of the form $[a,\infty) \times (-\infty,b] \times (b, \infty)$ corresponding to an interval $[a,b] \in \Eu{I}$.  Under the assumption (valid under the lifting defined above) that each uncertain point can have at most one location fall in each range, we can now decompose the ranges and count the number of points that fall in each sub-range and add them together.  
Using the techniques (described in detail in Section \ref{subsec:highdre}) of Bohus~\cite{bohus} and Spencer \etal~\cite{spencer} we can consider $\ell =(k-t) (1+\lceil \log n \rceil)^2$ permutations of $\bar{P}_\cert^t$ such that each range $\bar{r} \in \bar{\Eu{I}}$ can be written as the points in a disjoint union of intervals from these permutations.  
To extend low discrepancy to \emph{each} of the $k$ distinct values of threshold $t$, there are $k$ such liftings and $h = k\cdot \ell = O(k^2 \log^2 n)$ such permutations we need to consider.   We can construct a coloring $\chi : P \to \{-1,+1\}$ such that intervals on each permutation has discrepancy $O(\sqrt{h} \log n) = O(k \log^2 n )$. Recall that for any fixed threshold $t$ we only need to consider the corresponding $\ell$ permutations, hence the total discrepancy for any such range is at most the sum of discrepancy from all corresponding 
$\ell=O( k \log^2 n)$ permutations or $O(k^2 \log^4 n)$.  
Finally, this low-discrepancy coloring can be iterated (via Theorem \ref{thm:disc2samp}) to achieve the following theorem.  

\begin{theorem}\label{thm:rcsample1d}
Consider an uncertain point set $P$ along with ranges $\Eu{I}$ of two-sided intervals.  We can construct an $\eps$-RC coreset $T$ for $(P,\Eu{I})$ of size $O((k^2 /\eps) \log^4 (k/\eps))$.  
\end{theorem}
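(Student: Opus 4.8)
The plan is to reduce the two-sided \RC query to a pure range-counting problem in $\bl{R}^3$ and then build a \emph{single} low-discrepancy coloring of $P$ that controls the count error for every threshold at once. The preceding corollary already supplies the reduction for a fixed threshold $t$: it identifies $G_{P,r}(t/k)$ with $|\bar{P}^t \cap \bar{r}^t|/n$, where each uncertain point contributes at most one lifted location to any range $\bar{r} \in \bar{\Eu{I}}$. That ``at most one location'' property is exactly what lets me decompose a range and add up per-piece counts without overcounting, so the only remaining task is to show that a sub-sampling of $P$ preserves $|\bar{P}^t \cap \bar{r}^t|$ up to additive error $\eps n$, and that one coloring does this uniformly over all $t \in [k]$.

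First I would fix $t$ and apply the Bohus--Spencer machinery of Section \ref{subsec:highdre} to the lifted set $\bar{P}^t_\cert \subset \bl{R}^3$: each of its $k-t$ canonical transversals yields $(1+\lceil \log n \rceil)^2$ permutations (the $d=3$ case), so every range $\bar{r}$ decomposes into a disjoint union of $O(k \log^2 n)$ intervals drawn from the $\ell = (k-t)(1+\lceil\log n\rceil)^2$ permutations. To obtain a coloring valid for all thresholds simultaneously, I would take the union of the permutation families over all $k$ liftings, giving $h = O(k^2 \log^2 n)$ permutations, and invoke the bound $D_h(n) = O(\sqrt{h}\log n) = O(k \log^2 n)$ of Spencer \etal~\cite{spencer} to produce a single $\chi : P \to \{-1,+1\}$ under which every individual interval has discrepancy $O(k\log^2 n)$. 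Since for a fixed $t$ the range $\bar{r}$ is a disjoint union of $O(k\log^2 n)$ such intervals, the triangle inequality gives total discrepancy $O(k^2 \log^4 n)$ for that range, and this holds for every $t$ precisely because all relevant permutations sit inside the combined family.

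With a coloring of discrepancy $O(k^2 \log^4 n)$, i.e.\ of the form $\gamma \log^\omega n$ with $\gamma = O(k^2)$ and $\omega = 4$, I would iterate it in the merge-reduce framework (via Theorem \ref{thm:disc2samp}) exactly as in the \RE constructions. The standard discrepancy-to-sample conversion then yields a coreset $T$ of size $O((\gamma/\eps)\log^\omega(\gamma/\eps)) = O((k^2/\eps)\log^4(k/\eps))$, where $\log(k^2/\eps) = \Theta(\log(k/\eps))$ absorbs the square. Because the conversion preserves each lifted count within the target error for all $t$, the resulting $T$ satisfies $|G_{P,r}(\tau) - G_{T,r}(\tau)| \leq \eps$ at every $r \in \Eu{I}$ and every $\tau = t/k$; this extends to all $\tau \in [0,1]$ since $G_{P,r}$ changes value only at the thresholds $\tau = t/k$.

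I expect the main obstacle to be this simultaneous control across thresholds: each $t$ induces its own lifting $\bar{P}^t$ and its own permutation family, yet the coloring must live on the original points $P$, so a separate good coloring per threshold would not combine. The remedy is to merge all $k$ permutation families before applying the discrepancy bound, which is what inflates the permutation count to $h = O(k^2\log^2 n)$ and, together with the $O(k\log^2 n)$ pieces per range, is exactly the source of the $k^2$ factor that separates this bound from the $\sqrt{k}$ obtained for \RE (where the count is normalized by $k$ and one takes a max over transversals rather than a sum). A secondary point to verify is that the unbounded orthant-style boxes $[a,\infty)\times(-\infty,b]\times(b,\infty)$ are still handled by the axis-aligned-box decomposition, which they are as a degenerate special case.
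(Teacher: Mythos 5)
Your proposal matches the paper's proof essentially step for step: the same per-threshold lifting to $\bl{R}^3$ with the one-location-per-range guarantee, the same merging of all $k$ permutation families into $h = O(k^2\log^2 n)$ permutations colored via $D_h(n) = O(\sqrt{h}\log n)$, the same summation over the $O(k\log^2 n)$ intervals per range to get discrepancy $O(k^2\log^4 n)$, and the same conversion through Theorem \ref{thm:disc2samp}. The only addition is your explicit remark that $G_{P,r}$ is a step function changing only at $\tau = t/k$, which the paper leaves implicit; everything else is the paper's argument.
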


\subsection{\RC Coresets for Rectangles in $\bl{R}^d$}

The approach for $\Eu{I}$ can be further extended to $\Eu{R}_d$, axis-aligned rectangles in $\bl{R}^d$. Again the key idea
is to define a proxy point set $\bar{P}$ such that $| \bar r \cap \bar{P}|$ equals the number of uncertain points in $r$ with
at least threshold $t$. This requires a suitable lifting map and decomposition of space to prevent over or under counting;
we employ techniques from Kaplan \etal~\cite{colors}.

First we transform queries on axis-aligned rectangles in $\bl{R}^d$ to the semi-bounded case in $\bl{R}^{2d}$.  Denote the $x_i$-coordinate of a point $q$ as $x_i(q)$, we double all the coordinates of each point $q = (x_1(q),...,x_{\ell}(q),..., x_d(q))$ to obtain point \[ \tilde{q} = (-x_1(q),x_1(q)...,-x_{\ell}(q),x_{\ell}(q),..., -x_d(q),x_d(q))\] in $\bl{R}^{2d}$. 
Now answering range counting query $\prod_{i=1}^d [a_i, b_i]$ is equivalent to solving the query  \[
\prod_{i=1}^d\left[(-\infty, -a_i] \times (-\infty, b_i] \right]
\]
on the lifted point set.

Based on this reduction we can focus on queries of \emph{negative orthants} of the form $\prod_{i=1}^d (-\infty, a_i]$ and represent each orthant by its apex $a =(a_1, ..., a_{d}) \in \bl{R}^{d}$ as $Q_a^-$.  Similarly, we can define $Q_a^+$ as \emph{positive orthants} in the form $\prod_{i=1}^d [a_i, \infty) \subseteq \bl{R}^d$. For any point set $A \subset \bl{R}^d$ define $U(A) = \cup_{a \in A} Q_a^+$.

A \emph{tight} orthant has \emph{a} location of $p_i \in P$ incident to every bounding facet.    
Let $C_{i,t}$ be the set of all apexes representing tight negative orthants that contain exactly $t$ locations of $p_i$; see Figure \ref{fig:rc123}(a).   An important observation is that query orthant $Q_a^-$ contains $p_i$ with threshold at least $t$ if and only if it contains at least one point from $C_{i,t}$.
 
Let $Q^+_{i,t} = \cup_{c \in C_{i,t} }Q_c^+$ be the locus of all negative orthant query apexes that contain at least $t$ locations of $p_i$; see Figure \ref{fig:rc123}(b).  
Notice that $Q^+_{i,t} = U(C_{i,t})$.
 
\begin{figure}[t]
\centering
\includegraphics[width=0.3\linewidth]{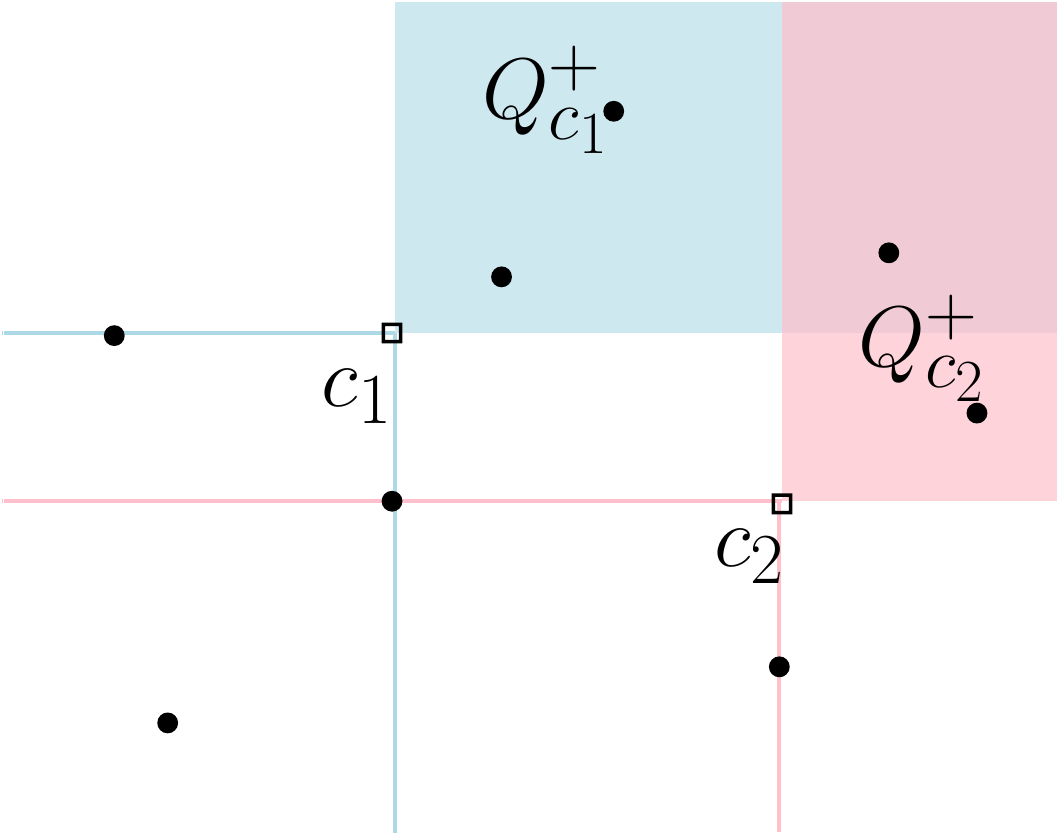}
\hspace{2mm}
\includegraphics[width=0.3\linewidth]{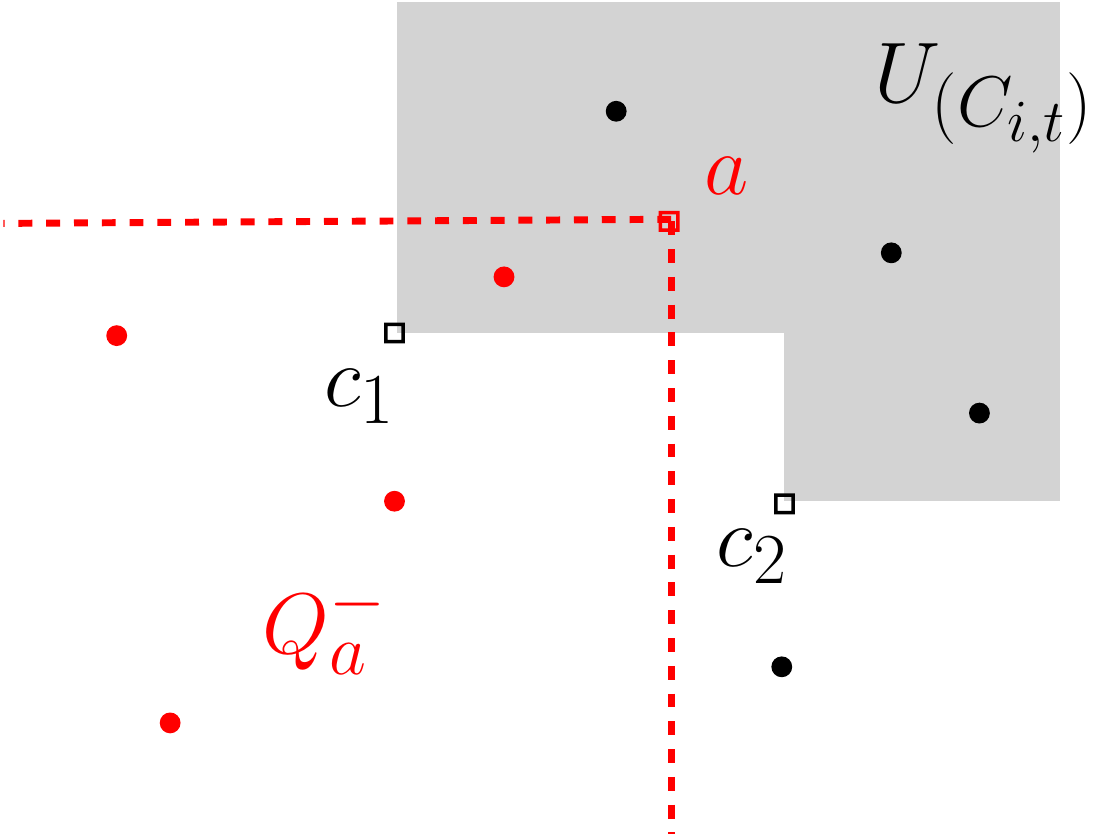}
\hspace{2mm}
\includegraphics[width=0.3\linewidth]{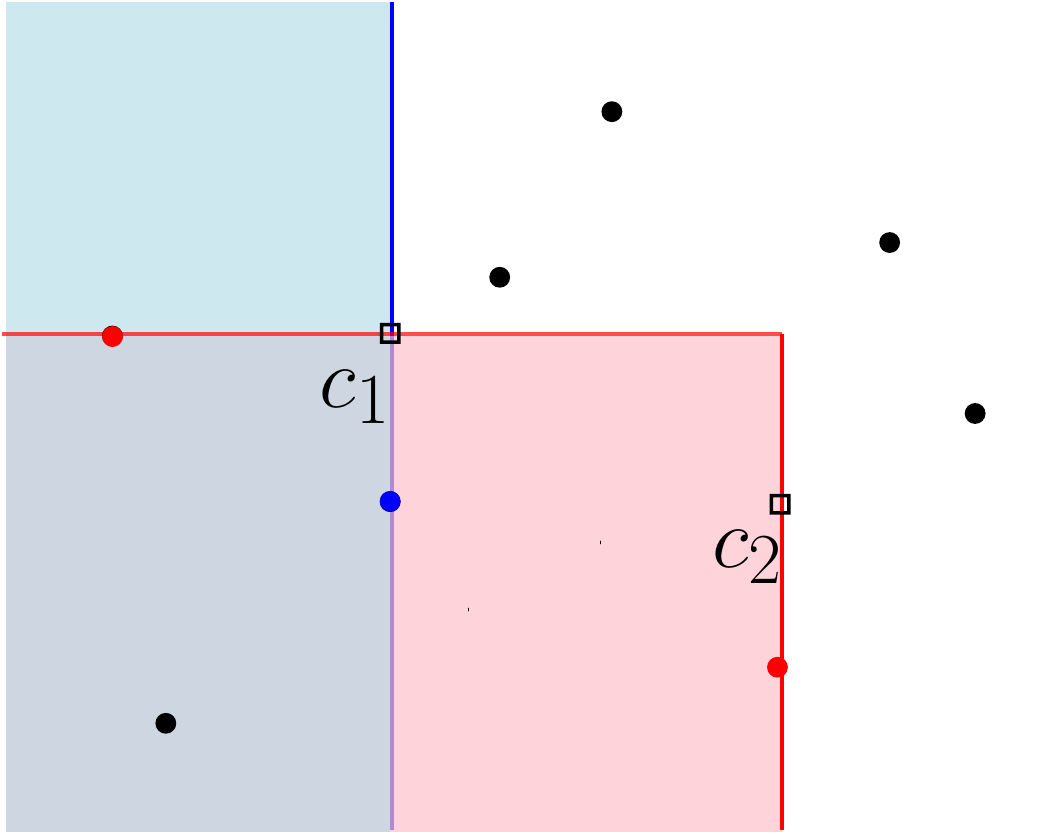}

\vspace{-1mm}
(a)  \hspace{1.85in} (b) \hspace{1.85in} (c)

\vspace{-3mm}
\caption{\small Illustration of uncertain point $p_i \in \bl{R}^2$ with $k = 8$ and $t = 3$. 
(a) All tight negative orthants containing exactly $t=3$ locations of $p_i$, their apexes are $C_{i,t} = \{c_1, c_2\}$. 
(b): $U(C_{i,t})$ is shaded and query $Q_a^-$.
(c): $M(C_{i,t})$, the maximal negative orthants of $C_{i,t}$ that are also bounded in the $x$-direction.  
\vspace{-4mm}}

\label{fig:rc123}
\end{figure}

\begin{lemma}
For any point set $p_i \subset \bl{R}^d$ of $k$ points and some threshold $1 \leq t \leq k$, we can decompose $U(C_{i,t})$ into $f(k) = O(k^d)$ pairwise disjoint boxes, $B(C_{i,t})$.
\end{lemma}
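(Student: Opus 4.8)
The plan is to exploit the fact that every apex in $C_{i,t}$ has coordinates drawn from a small fixed set of values, so that $U(C_{i,t})$ is aligned with a coarse axis-parallel grid. Since each $c \in C_{i,t}$ is the apex of a \emph{tight} negative orthant, each coordinate $c_\ell$ must coincide with the $\ell$-th coordinate of whatever location $p_{i,j}$ is incident to the facet $x_\ell = c_\ell$. Hence, writing $V_\ell = \{x_\ell(p_{i,j}) : j \in [k]\}$ for the set of $\ell$-th coordinate values among the $k$ locations, we have $c_\ell \in V_\ell$ and $|V_\ell| \le k$ for every $\ell \in [d]$, so the apexes lie on the grid $V_1 \times \cdots \times V_d$.

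First I would erect the grid determined by these values. In each dimension $\ell$, the $m_\ell = |V_\ell| \le k$ values $v_\ell^1 < \cdots < v_\ell^{m_\ell}$ cut the axis into the half-open intervals $[v_\ell^1, v_\ell^2), \ldots, [v_\ell^{m_\ell-1}, v_\ell^{m_\ell}), [v_\ell^{m_\ell}, \infty)$, discarding the portion below $v_\ell^1 = \min V_\ell$, which lies below every apex in coordinate $\ell$ and so meets no positive orthant $Q_c^+$. Taking products over the $d$ dimensions yields at most $\prod_\ell m_\ell \le k^d$ pairwise disjoint boxes (some unbounded above), which I take as the candidate pieces of $B(C_{i,t})$.

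The key step is to verify that each grid cell lies either entirely inside or entirely outside $U(C_{i,t})$, so that $U(C_{i,t})$ is exactly the union of the cells it contains. Fix a cell $\prod_\ell [v_\ell^{s_\ell}, v_\ell^{s_\ell+1})$ and an apex $c \in C_{i,t}$ with $c_\ell = v_\ell^{r_\ell}$. Because $c_\ell$ is itself a grid value, every point $x$ of the cell satisfies $x_\ell \ge c_\ell$ precisely when $s_\ell \ge r_\ell$, a condition independent of $x$; thus membership $x \in Q_c^+$ is constant over the cell, and so is membership in $U(C_{i,t}) = \bigcup_{c \in C_{i,t}} Q_c^+$. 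Retaining only the cells contained in $U(C_{i,t})$ therefore gives a decomposition of $U(C_{i,t})$ into $f(k) = O(k^d)$ pairwise disjoint boxes.

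The main obstacle here is bookkeeping rather than a genuine difficulty: one must fix the half-open convention consistently so the boxes are genuinely disjoint and tile $U(C_{i,t})$ without gaps, and handle the unbounded top cells correctly (they are still boxes, which is unavoidable since $U(C_{i,t})$ is upward-closed and hence unbounded). I would also note that the same cells arise from a recursive sweep along one coordinate: partition by the $m_1 \le k$ slabs of dimension $1$, recurse on the induced $(d-1)$-dimensional upward-closed cross-section within each slab, and extrude. This gives the recurrence $f(k,d) \le k \cdot f(k,d-1)$ with $f(k,1) = 1$, matching the $O(k^d)$ bound and exposing the ``maximal bounded orthant'' structure $M(C_{i,t})$ used in the subsequent lifting.
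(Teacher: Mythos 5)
Your proof is correct, but it takes a genuinely different route from the paper's. The paper does not construct the decomposition itself: it invokes Lemma 3.1 of Kaplan et al., which shows that $U(A)$ for any point set $A$ can be partitioned into exactly $|M(A)|$ pairwise disjoint boxes, where $M(A)$ is the set of maximal empty negative orthants (bounded in the positive $x_1$-direction); the paper then bounds $|M(C_{i,t})| \leq k^d$ by the same observation you make --- that each coordinate of each apex in $C_{i,t}$ coincides with a coordinate of some location $p_{i,j}$, so each facet of a maximal empty orthant is pinned to one of at most $k$ values per axis. You instead build the decomposition from scratch: the apexes lie on the grid $V_1 \times \cdots \times V_d$ with $|V_\ell| \leq k$, membership in each $Q_c^+$ (hence in the upward-closed union $U(C_{i,t})$) is constant on each half-open grid cell, and so the cells contained in $U(C_{i,t})$ tile it with at most $\prod_\ell |V_\ell| \leq k^d$ disjoint boxes. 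Your argument is self-contained and elementary, and it yields everything the downstream construction needs (disjointness and exact coverage, so that the query apex $a$ stabs exactly one box when $p_i$ meets the threshold); the count of boxes enters later only as $g_k = O(k^{2d})$ in the lifted space, which is unaffected. What the paper's route buys is an economical decomposition --- $|M(C_{i,t})|$ boxes rather than a full grid refinement, which can be substantially fewer in practice --- and reuse of machinery already needed from Kaplan et al.\ elsewhere in the section; asymptotically the two bounds coincide. The only bookkeeping you should nail down (and correctly flag) is the half-open convention and the unbounded top cells, which must survive the second lifting of boxes to points in $\bl{R}^{2d}$; your closing recurrence $f(k,d) \leq k \cdot f(k,d-1)$ with $f(k,1)=1$ actually gives the marginally sharper $k^{d-1}$, which is consistent with, and subsumed by, the stated $O(k^d)$.
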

 
\begin{proof}
Let $M(A)$ be the set of maximal empty negative orthants for a point set $A$, such that any $m \in M(A)$ is also bounded in the positive direction 
along the $1$st coordinate axis.  
Kaplan \etal~\cite{colors} show (within Lemma 3.1) that $|M(A)| = |B(A)|$ and provide a specific construction of the boxes $B$.  Thus we only need to bound $|M(C_{i,t})|$ to complete the proof; see $M(C_{i,t})$ in Figure \ref{fig:rc123}(c).  
We note that each coordinate of each $c \in C_{i,t}$ must be the same as some $p_{i,j} \in p_i$.  Thus for each coordinate, among all $c \in C_{i,t}$ there are at most $k$ values.  And each maximal empty tight orthant $m \in M(C_{i,t})$ is uniquely defined by the $d$ coordinates along the axis direction each facet is orthogonal to.  Thus $|M(C_{i,t})|  \leq k^d$, completing the proof.  
\end{proof}
 
Note that as we are working in a lifted space $\bl{R}^{2d}$, this corresponds to $U(C_{i,t})$ being decomposed into $f(k) = O(k^{2d})$ pairwise \emph{disjoint} boxes in which $d$ is the dimensionality of our original point set.

\begin{lemma}
For negative orthant queries $Q_a^-$ with apex $a$ on uncertain point set $P$, a point $p_i \in P$ is in $Q_a^-$ with probability at least $t/k$ if $a$ is in some box in $B(C_{i,t})$, and $a$ will lie in at most one box from $B(C_{i,t})$.  
 \end{lemma}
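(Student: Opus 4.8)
The plan is to reduce both claims to the disjoint-box decomposition of $U(C_{i,t})$ furnished by the previous lemma, using only the monotonicity of negative orthants under their apex. First I would unpack the probability statement: since each location $p_{i,j}$ occurs with probability $1/k$ and these are the only possible positions of $p_i$, the event ``$p_i$ is in $Q_a^-$ with probability at least $t/k$'' is exactly the event that at least $t$ of the $k$ locations $p_{i,1}, \ldots, p_{i,k}$ lie inside $Q_a^-$. This reframes the goal entirely in terms of how many locations the fixed orthant $Q_a^-$ captures.

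For the first claim, suppose $a$ lies in some box of $B(C_{i,t})$. By the previous lemma $B(C_{i,t})$ is a decomposition of $U(C_{i,t}) = \cup_{c \in C_{i,t}} Q_c^+$ into pairwise disjoint boxes, so lying in some box is equivalent to $a \in U(C_{i,t})$; hence there exists $c \in C_{i,t}$ with $a \in Q_c^+$, i.e. $a_\ell \geq c_\ell$ for every coordinate $\ell$. Coordinatewise domination of apexes means precisely that $Q_c^- \subseteq Q_a^-$, so $Q_a^-$ contains every location of $p_i$ that $Q_c^-$ does. Since $c \in C_{i,t}$ is the apex of a tight orthant containing exactly $t$ locations of $p_i$, we conclude $Q_a^-$ contains at least $t$ locations, giving probability at least $t/k$. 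This is just the forward direction of the observation recorded earlier, that $Q_a^-$ meets $p_i$ with threshold at least $t$ if and only if it contains at least one apex of $C_{i,t}$; here the containment $c \in Q_a^-$ is literally the same relation as $a \in Q_c^+$.

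The second claim is immediate from the same lemma: the boxes of $B(C_{i,t})$ are pairwise disjoint, so a single apex $a$ can lie in at most one of them.

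I do not expect a genuine obstacle here, since the combinatorial work---constructing the boxes, verifying their pairwise disjointness, and checking that their union is exactly $U(C_{i,t})$---was already discharged in the preceding lemma. The one point requiring care is the translation between two dual viewpoints: a location-counting statement about the fixed orthant $Q_a^-$ versus a membership statement about the query apex $a$ in the locus $U(C_{i,t})$. Making the chain of equivalences $a \in Q_c^+ \iff c \in Q_a^- \iff Q_c^- \subseteq Q_a^-$ explicit is what ties the decomposition back to the threshold-$t$ containment, and it is the only place where one must invoke the monotone (staircase) structure of orthants rather than quoting the prior lemma verbatim.
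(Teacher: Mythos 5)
Your proof is correct and follows essentially the same route as the paper: both reduce the threshold condition to the statement that $a \in U(C_{i,t})$ and then invoke the disjointness of the boxes in $B(C_{i,t})$. You simply spell out the duality $a \in Q_c^+ \iff c \in Q_a^- \iff Q_c^- \subseteq Q_a^-$ and the probability-to-counting translation that the paper leaves implicit in its ``important observation.''
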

\begin{proof}
The query orthant $Q_a^-$ contains point $p_i$ with threshold at least $t$ if and only if $Q_a^-$ contains at least one point from $C_{i,t}$ and this happens only when $a \in U(C_{i,t})$. Since the union of constructed boxes in $B(C_{i,t})$ is equivalent to $U(C_{i,t})$ and they are disjoint, the result follows.
\end{proof} 
 
\begin{corollary}
The number of uncertain points from $P$ in query range $Q_a^-$ with probability at least $t/k$ is exactly the number of boxes in $\cup_i^n B(C_{i,t})$ that contain $a$.  
 \end{corollary}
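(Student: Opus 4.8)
The plan is to reduce the count to a sum of per-point contributions and then invoke the preceding lemma term by term. For a fixed query apex $a$, define for each $i \in [n]$ the indicator $X_i(a) \in \{0,1\}$ that equals $1$ precisely when $\Pr_{Q \Subset P}[q_i \in Q_a^-] \geq t/k$, i.e.\ when $p_i$ lies in $Q_a^-$ with threshold at least $t/k$. Since the event that each $p_i$ satisfies the threshold depends only on the locations of $p_i$ and not on any other $p_{i'}$, the quantity we wish to compute---the number of uncertain points from $P$ in $Q_a^-$ with probability at least $t/k$---is exactly $\sum_{i=1}^n X_i(a)$.

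Next I would rewrite each indicator $X_i(a)$ as a box-counting quantity using the previous lemma. That lemma states that $p_i$ is in $Q_a^-$ with probability at least $t/k$ if and only if $a$ lies in some box of $B(C_{i,t})$, and moreover that $a$ lies in \emph{at most} one such box. Consequently $X_i(a) = \big|\{B \in B(C_{i,t}) \mid a \in B\}\big|$, since this cardinality is $1$ when the threshold is met and $0$ otherwise. Summing over $i$ gives
\[
\sum_{i=1}^n X_i(a) = \sum_{i=1}^n \big|\{B \in B(C_{i,t}) \mid a \in B\}\big| = \Big|\big\{B \in \textstyle\bigcup_{i=1}^n B(C_{i,t}) \mid a \in B\big\}\Big|,
\]
where in the last step the boxes of the distinct collections $B(C_{1,t}), \ldots, B(C_{n,t})$ are treated as a single labeled family indexed by their source point $i$.

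The only point that needs care---and the one I would highlight as the crux---is ensuring no over-counting occurs. Within a single point $p_i$, the disjointness of the boxes in the decomposition $B(C_{i,t})$ (established when $U(C_{i,t})$ was partitioned into $O(k^{2d})$ pairwise disjoint boxes) guarantees $a$ falls in at most one box, so each satisfying point contributes exactly one to the total rather than several. Across different points, no cancellation or collision is possible because the collections are kept separate by index $i$; even if two points happened to generate a geometrically identical box, these are counted as distinct members of $\bigcup_{i=1}^n B(C_{i,t})$, matching the fact that two distinct uncertain points satisfying the threshold should each be counted once. This yields the claimed equality and completes the corollary.
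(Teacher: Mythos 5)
Your proof is correct and is exactly the argument the paper leaves implicit: the corollary is stated without proof as an immediate consequence of the preceding lemma, and your per-point indicator decomposition, using the lemma's ``at most one box'' guarantee to equate each indicator with a box count, is the intended reasoning. Your remark that $\bigcup_{i=1}^n B(C_{i,t})$ must be read as a labeled (multiset) union so that geometrically coincident boxes from distinct points are counted separately is a worthwhile clarification, but it does not change the approach.
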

 
Thus for a set of boxes representing $P$, we need to perform count stabbing queries with apex $a$ and show a low-discrepancy coloring of boxes.  

We do a second lifting by transforming each point $a \in \bl{R}^d$ to a semi-bounded box $\bar{a} = \prod_{i=1}^d ((-\infty, a_i] \times [a_i, \infty))$ and each box $b \in \bl{R}^d$  of the form $ \prod_i^d [x_i, y_i]$ to a point $\bar{b} = (x_1, y_1, ..., x_{\ell}, y_{\ell}, ..., x_d, y_d)$ in $\bl{R}^{2d}$. It is easy to verify that $a \in b$ if and only if $\bar{b} \in \bar{a}$.

Since this is our second doubling of dimension, we are now dealing with points in $\bl{R}^{4d}$. Lifting $P$ to $\bar{P}$ in $\bl{R}^{4d}$ now presents an alternative view of each uncertain point $p_i \in P $ as an uncertain point $\bar{p_i}$ in $\bl{R}^{4d}$ with $g_k = O(k^{2d})$ possible locations with the query boxes represented as $\bar{\Eu{R}}$ in $\bl{R}^{4d}$.

 We now proceed similarly to the proof of Theorem \ref{thm:rcsample1d}. For a fixed threshold $t$, obtain $\ell = g_k \cdot (1+\lceil \log n \rceil)^{4d-1}$ disjoint permutations of $\bar{P}_\cert^t$ such that each range $\bar{r} \in \bar{\Eu{R}}$ can be written as the points in a disjoint union of intervals from these permutations. 
 For the $k$ distinct values of $t$, there are $k$ such liftings and $h = O \left(k \cdot g_k \cdot \log^{4d-1} n \right)$ such permutations we need to consider, and we can construct a coloring $\chi : P \to \{-1,+1\}$ so that intervals on each permutation have discrepancy \[O(\sqrt{h} \log n) = O(k^{d+\frac{1}{2}} \log^{\frac{4d+1}{2}} n ).\] 
  Hence for any such range and specific threshold $t$, the total discrepancy is the sum of discrepancy from all corresponding $\ell = O \left( g_k \cdot \log^{4d-1} n \right)$ permutations, or $O \left(k^{3d +\frac{1}{2}} \log^{6d - \frac{1}{2}} n \right)$.  
By applying the iterated low-discrepancy coloring (Theorem \ref{thm:disc2samp}), we achieve the following result.

\begin{theorem}
\label{thm:RC-Rd}
Consider an uncertain point set $P$ and range space $(P_\cert, \Eu{R}_d)$ with ranges defined by axis-aligned rectangles in $\bl{R}^d$. 
Then an $\eps$-\RC coreset can be constructed of size $O \left((k^{3d +\frac{1}{2}}/\eps) \log^{6d - \frac{1}{2}} (k/\eps) \right)$.
\end{theorem}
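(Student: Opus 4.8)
The plan is to assemble the reductions and the discrepancy estimate developed just above into a single application of the merge-reduce machinery, exactly as in the proofs of Theorem~\ref{thm:rcsample1d} and Theorem~\ref{thm:highdeps-RE}. The key point is that the problem of counting, for a query rectangle $r$ and a threshold $t/k$, those uncertain points lying in $r$ with probability at least $t/k$ has already been converted (via the two liftings into $\bl{R}^{4d}$, the tight-orthant sets $C_{i,t}$, and the disjoint-box decompositions $B(C_{i,t})$) into a clean stabbing problem: by the corollary above, this count equals the number of boxes of $\bigcup_{i} B(C_{i,t})$ stabbed by the query apex $a$, and each uncertain point $p_i \in P$ contributes at most one such box per range. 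This ``at most one per range'' property is precisely what will license decomposing a range into pieces, counting within each piece independently, and summing without over- or under-counting.

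First I would fix the proxy representation: each $p_i \in P$ becomes an uncertain point $\bar{p}_i$ in $\bl{R}^{4d}$ with $g_k = O(k^{2d})$ possible locations, and the query rectangles become the semi-bounded family $\bar{\Eu{R}}$. To control all $k$ thresholds with a \emph{single} coloring $\chi : P \to \{-1,+1\}$ of the original uncertain points, I would form the union of the permutation families over every threshold $t \in [k]$, obtaining $h = O(k \cdot g_k \cdot \log^{4d-1} n) = O(k^{2d+1} \log^{4d-1} n)$ permutations of $\bar{P}_\cert^t$. Invoking the bound $D_\ell(n) = O(\sqrt{\ell} \log n)$ of Spencer \etal~\cite{spencer} then yields a coloring whose discrepancy on intervals of any one permutation is $O(\sqrt{h}\log n) = O(k^{d+\frac{1}{2}} \log^{\frac{4d+1}{2}} n)$.

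Next I would recompose. For a fixed threshold $t$ only the corresponding $\ell = O(g_k \log^{4d-1} n) = O(k^{2d} \log^{4d-1} n)$ permutations are relevant, and every range $\bar{r} \in \bar{\Eu{R}}$ is a disjoint union of that many intervals; summing the per-interval bound gives a per-range \RC discrepancy of $O(k^{3d+\frac{1}{2}} \log^{6d-\frac{1}{2}} n)$ under $\chi$. This is of the standard form $\gamma \log^{\omega} n$ with $\gamma = O(k^{3d+\frac{1}{2}})$ and $\omega = 6d - \frac{1}{2}$, both independent of $n$ and $\eps$. Finally I would feed this into the iterated halving of Chazelle and Matousek~\cite{CM96} (Theorem~\ref{thm:disc2samp}): repeatedly applying the low-discrepancy coloring and discarding one color class produces an $\eps$-\RC coreset of size $O((\gamma/\eps) \log^{\omega}(\gamma/\eps))$, and substituting $\gamma = O(k^{3d+\frac{1}{2}})$ together with $\log(k^{3d+\frac{1}{2}}/\eps) = O(\log(k/\eps))$ (for fixed $d$) gives the claimed $O((k^{3d+\frac{1}{2}}/\eps) \log^{6d-\frac{1}{2}}(k/\eps))$.

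The main obstacle is not the final arithmetic but ensuring that one coloring of $P$ simultaneously certifies low discrepancy across all $k$ thresholds and all ranges in $\bar{\Eu{R}}$, and that the decompose-and-sum step is genuinely valid: this relies on the disjointness of the boxes $B(C_{i,t})$ and on the ``at most one location per range'' guarantee of the lifting, so that the summed interval discrepancies upper-bound the true \RC counting error rather than being inflated by a single point contributing to several pieces of the same range. This is the higher-dimensional, multi-threshold analog of Lemma~\ref{lem:jointdisc}, and verifying it carefully inside the $\bl{R}^{4d}$ lifting is the delicate part.
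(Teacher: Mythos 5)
Your proposal is correct and follows essentially the same route as the paper: the two liftings into $\bl{R}^{4d}$ via the tight-orthant sets $C_{i,t}$ and the disjoint-box decompositions $B(C_{i,t})$, a single coloring over the union of the $h = O(k \cdot g_k \cdot \log^{4d-1} n)$ permutation families for all thresholds, recomposition over the $\ell = O(g_k \log^{4d-1} n)$ intervals per range to get per-range discrepancy $O(k^{3d+\frac{1}{2}} \log^{6d-\frac{1}{2}} n)$, and then Theorem~\ref{thm:disc2samp}. Your closing remark about the disjointness of $B(C_{i,t})$ and the ``at most one box per range'' guarantee being what legitimizes the decompose-and-sum step is exactly the role those lemmas play in the paper's argument.
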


\section{\RQ Coresets}

In this section, given an uncertain point set $P$ and its $\eps$-\RE coreset $T$, we want to determine values $\eps'$ and $\alpha$ so $T$ is an $(\eps',\alpha)$-\RQ coreset.  
That is for any $r \in \Eu{A}$ and threshold $\tau \in [0,1]$ there exists a $\gamma \in [\tau-\alpha, \tau+\alpha]$ such that 
\[
\Big| \Pr_{Q \Subset P} \hspace{-1mm} \left[\frac{|Q \cap r|}{|Q|} \leq \tau \right]  -
\Pr_{S \Subset T} \hspace{-1mm} \left[\frac{|S \cap r|}{|S|} \leq \gamma \right] \Big| \leq \eps'.
\]

At a high level, our tack will be to realize that both $|Q \cap r|$ and $|S \cap r|$ behave like Binomial random variables.  By $T$ being an $\eps$-\RE coreset of $P$, then after normalizing, its mean is at most $\eps$-far from that of $P$.  Furthermore, Binomial random variables tend to concentrate around their mean--and more so for those with more trials.  This allows us to say $|S \cap r|/|S|$ is either $\alpha$-close to the expected value of $|Q \cap r|/|Q|$ or is $\eps'$-close to $0$ or $1$.  Since $|Q \cap r|/|Q|$ has the same behavior, but with more concentration, we can bound their distance by the $\alpha$ and $\eps'$ bounds noted before.  We now work out the details.  
  
\begin{theorem}
If $T$ is an $\eps$-\RE coreset of $P$ for $\eps \in (0,1/2)$, 
then  $T$ is an $(\eps',\alpha)$-\RQ coreset for $P$ for $\eps',\alpha \in (0,1/2)$ and satisfying
$\alpha \geq \eps +  \sqrt{(1/2|T|) \ln(2/\eps')}$.  
\label{thm:RQ-CLM}
\end{theorem}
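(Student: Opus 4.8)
The plan is to fix a range $r \in \Eu{A}$ and study the two random variables $|Q \cap r|$ (for $Q \Subset P$) and $|S \cap r|$ (for $S \Subset T$). Since the uncertain points are mutually independent, each of these is a sum of independent Bernoulli indicators, one per uncertain point, hence a Binomial-type random variable. Writing $r(Q) = |Q \cap r|/|P|$ and $r(S) = |S \cap r|/|T|$, their expectations are exactly $E_{r(P)}$ and $E_{r(T)}$, and the hypothesis that $T$ is an $\eps$-\RE coreset gives $|E_{r(P)} - E_{r(T)}| \le \eps$; that is, the two normalized means differ by at most $\eps$. First I would record this mean comparison and set $\lambda = \sqrt{(1/2|T|)\ln(2/\eps')}$, the quantity appearing in the bound on $\alpha$.

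Next I would apply Hoeffding's inequality to each normalized count. Because $r(Q)$ and $r(S)$ are averages of independent $\{0,1\}$-valued terms over $|P|$ and $|T|$ points respectively, the one-sided bound with this specific $\lambda$ yields $\Pr[r(S) - E_{r(T)} \ge \lambda] \le \eps'/2$ and the symmetric lower-tail statement; the same holds for $r(Q)$ with an even smaller deviation, since $|P| \ge |T|$ makes $Q$ at least as concentrated as $S$. The upshot is that each normalized count lies within $\lambda$ of its mean except with probability $\eps'/2$ on each side, and since the means are within $\eps$ of one another, $r(S)$ lies within $\alpha = \eps + \lambda$ of $E_{r(P)}$ except with probability $\eps'$.

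With these tail bounds in hand, I would fix a threshold $\tau$ and split into three regimes according to the position of $\tau$ relative to $E_{r(P)}$. When $\tau \le E_{r(P)} - \lambda$, the lower-tail bound forces $F_{P,r}(\tau) \le \eps'/2$; choosing $\gamma = \tau - \eps$ pushes $\gamma$ below $E_{r(T)} - \lambda$ (using $E_{r(P)} - \eps \le E_{r(T)}$), so $F_{T,r}(\gamma) \le \eps'/2$ as well, and the two agree to within $\eps'$ while $|\gamma - \tau| = \eps \le \alpha$. The regime $\tau \ge E_{r(P)} + \lambda$ is handled symmetrically with $\gamma = \tau + \eps$, driving both CDFs to within $\eps'/2$ of $1$. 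The remaining central regime $|\tau - E_{r(P)}| < \lambda$ is where both CDFs are transitioning and neither is pinned near $0$ or $1$; here I would take the mean-compensated shift $\gamma = \tau + (E_{r(T)} - E_{r(P)})$, so that $F_{P,r}(\tau)$ and $F_{T,r}(\gamma)$ compare the respective centered counts against the identical offset $\tau - E_{r(P)}$, and then vary $\gamma$ over the window $[\tau + (E_{r(T)} - E_{r(P)}) - \lambda,\ \tau + (E_{r(T)} - E_{r(P)}) + \lambda] \subseteq [\tau - \alpha, \tau + \alpha]$. Over this window $F_{T,r}$ sweeps monotonically across a range that I would argue brackets $F_{P,r}(\tau)$, using that $Q$ is no wider than $S$, so by monotonicity some $\gamma$ in the window meets $|F_{P,r}(\tau) - F_{T,r}(\gamma)| \le \eps'$.

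I expect this central regime to be the main obstacle. Unlike the tails, we cannot simply drive $F_{T,r}$ to a known constant, so the argument must show that as $\gamma$ ranges over a window of half-width $\lambda$ the value $F_{T,r}(\gamma)$ actually covers $F_{P,r}(\tau)$ up to the target error $\eps'$. This requires controlling the granularity of $F_{T,r}$, namely the sizes of the jumps of the Binomial CDF of $|S \cap r|$, relative to $\eps'$, and pairing the bracketing endpoints using $|P| \ge |T|$ to keep the $P$-distribution no more spread out than the $T$-distribution. Once the bracketing-plus-monotonicity step is established inside this window, collecting the three regimes under the constraint $\alpha \ge \eps + \lambda$ yields the theorem.
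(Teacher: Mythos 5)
Your setup --- Hoeffding concentration of $|S\cap r|/|S|$ about its mean, the $\eps$ mean shift supplied by the \RE property, and a three-way split on the position of $\tau$ relative to $E_{r(P)}$ --- is exactly the paper's approach, and your treatment of the two tail regimes (both CDFs pinned within $\eps'$ of $0$ or of $1$, with $\gamma = \tau \mp \eps$) is correct and matches the paper. The gap is the one you yourself flag: the central regime $|\tau - E_{r(P)}| < \lambda$ is not established, and the route you propose for it does not go through from the stated hypotheses. Your bracketing-plus-monotonicity step needs the jumps of the step function $F_{T,r}$ to be at most roughly $\eps'$, i.e.\ the maximum point mass of the Poisson--Binomial variable $|S\cap r|$ to be $O(\eps')$. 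That point mass can be as large as $\Theta(1/\sqrt{|T|})$ (for instance when every $p_i\in T$ has $\Pr[q_i\in r]=1/2$), while the theorem's constraints only force $|T| = \Omega(\ln(2/\eps'))$ through the requirement $\alpha < 1/2$, not $|T| = \Omega(1/\eps'^2)$. So when $|T| \ll 1/\eps'^2$, the function $F_{T,r}$ can jump over the value $F_{P,r}(\tau)$ by more than $\eps'$ for a mid-range $\tau$, and no choice of $\gamma$ repairs a vertical miss. This phenomenon is visible in the paper's own Figure~1 example, where the reported $\eps' = 0.1016$ is precisely the distance from $F_{P,r}(0.75)$ to the nearer side of a jump of $F_{T,r}$.

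For what it is worth, the paper's own proof only argues the two tails ($x \le M-\alpha$ and $x \ge M+\alpha$) and then asserts the conclusion, so it is silent on precisely the regime you identify as the obstacle; you have located a real soft spot rather than missed an idea the paper supplies. To actually close the argument one needs either an additional hypothesis controlling the granularity of $F_{T,r}$ (e.g.\ $|T| = \Omega(1/\eps'^2)$, or a lower bound on the variance of $|S\cap r|$), or a reading of the $(\eps',\alpha)$-quantization against the completed graph of $F_{T,r}$ in which the values skipped at a jump still count. As submitted, your proposal is a faithful reconstruction of the paper's argument together with a correct diagnosis of where it is incomplete, but it is not yet a proof.
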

\begin{proof}
We start by examining a Chernoff-Hoeffding bound on a set of independent random variables  $X_i$ so that each $X_i \in [a_i, b_i]$ with $\Delta_i = b_i - a_i$.  Then for some parameter $\beta \in (0, \sum_i \Delta_i/2)$ 
\[
\Pr \left[ \left| \sum_i X_i - \E\left[\sum_i X_i\right] \right| \geq \beta \right] 
\leq 
2 \exp \left(\frac{-2\beta^2}{\sum_i \Delta_i^2}\right).
\]

Consider any $r \in \Eu{A}$.  
We now identify each random variable $X_i = 1 (q_i \in r)$ (that is, $1$ if $q_i \in r$ and $0$ otherwise) where $q_i$ is the random instantiation of some $p_i \in T$.  So $X_i \in \{0,1\}$ and $\Delta_i = 1$.  
Thus by equating $|S \cap r| = \sum X_i$
\begin{align*}
\Pr_{S \Subset T} \left[ \left| |S \cap r| - \E\left[|S \cap r|\right] \right| \geq \beta |S| \right]  \leq 
2 \exp \left(\frac{-2\beta^2 |S|^2}{\sum_i \Delta_i^2}\right)
 = 2 \exp(-2 \beta^2 |S|) \leq \eps'.
\end{align*}
Thus by solving for $\beta$ (and equating $|S| = |T|$)
\[
\Pr_{S \Subset T} \left[ \left| \frac{|S \cap r|}{|S|}  - \E\left[\frac{|S \cap r|}{|S|}\right] \right| \geq \sqrt{\frac{1}{2|T|} \ln( \frac{2}{\eps'})}\right] \leq \eps'.
\]
Now by $T$ being an $\eps$-\RE coreset of $P$ then 
\[
\left| \E_{S \Subset T}\left[\frac{|S \cap r|}{|S|}\right] - \E_{Q \Subset P}\left[\frac{|Q \cap r|}{|Q|} \right] \right| \leq \eps.  
\]
Combining these two we have
\[
\Pr_{S \Subset T}  \hspace{-1mm} \left[ \left| \frac{|S \cap r|}{|S|}  - \E_{Q \Subset P} \hspace{-1mm} \left[\frac{|Q \cap r|}{|Q|}\right] \right| \geq \alpha \right] \leq \eps'
\]
for $\alpha = \eps + \sqrt{\frac{1}{2|T|} \ln( \frac{2}{\eps'})}$.

Combining these statements, for any $x \leq M-\alpha \leq M - \alpha'$ 
we have 
$\eps' > F_{T,r}(x) \geq 0$ and $\eps' > F_{P,r}(x) \geq 0$ (and symmetrically for $x \geq M+\alpha \geq M+\alpha')$.  
It follows that $F_{T,r}$ is an $(\eps',\alpha)$-quantization of $F_{P,r}$.  

Since this holds for any $r \in \Eu{A}$, by $T$ being an $\eps$-\RE coreset of $P$, it follows that $T$ is also an $(\eps', \alpha)$-\RQ coreset of $P$.  
\end{proof}

We can now combine this result with specific results for $\eps$-\RE coresets to get size bounds for $(\eps,\alpha)$-\RQ coresets.  To achieve the below bounds we set $\eps = \eps'$.

\begin{corollary}
For uncertain point set $P$ with range space $(P_\cert, \Eu{A})$, there exists a $(\eps, \eps + \sqrt{(1/2|T|) \ln(2/\eps)})$-RQ coreset of $(P,\Eu{A})$ of size $|T| = $
\begin{itemize} \denselist
\item $O((1/\eps^2) (\nu + \log(k/\delta)))$ when $\Eu{A}$ has VC-dimension $\nu$, with probability $1-\delta$ (Theorem \ref{thm:RE-samp}),
\item $O((\sqrt{k} /\eps) \log (k/\eps))$ when $\Eu{A} = \Eu{I}$ (Theorem \ref{thm:1deps-RE}), and
\item $O\left((\sqrt{k}/\eps) \log^{\frac{3d - 1}{2}} (k/\eps) \right)$ when $\Eu{A} = \Eu{R}_d$ (Theorem \ref{thm:highdeps-RE}).
\end{itemize}
\end{corollary}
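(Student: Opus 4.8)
The plan is to obtain each of the three bullets as an immediate composition of Theorem~\ref{thm:RQ-CLM} with the matching $\eps$-\RE coreset construction, since Theorem~\ref{thm:RQ-CLM} already converts \emph{any} $\eps$-\RE coreset into an $(\eps',\alpha)$-\RQ coreset. First I would fix $\eps \in (0,1/2)$ and, following the remark preceding the statement, set $\eps' = \eps$. With this choice the guarantee of Theorem~\ref{thm:RQ-CLM} specializes to an $(\eps,\alpha)$-\RQ coreset with $\alpha = \eps + \sqrt{(1/2|T|)\ln(2/\eps)}$, which is exactly the parameter appearing in the corollary (note the formula is self-consistent: $|T|$ on the right is just the cardinality of the coreset produced by the construction used).

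For each case I would then invoke the corresponding \RE result to produce a coreset $T$ of the claimed cardinality and conclude that it is an $\eps$-\RE coreset by construction, so that Theorem~\ref{thm:RQ-CLM} applies verbatim and certifies $T$ as the desired $(\eps,\alpha)$-\RQ coreset. Concretely: Theorem~\ref{thm:RE-samp} for a range space of VC-dimension $\nu$ gives $|T| = O((1/\eps^2)(\nu + \log(k/\delta)))$; Theorem~\ref{thm:1deps-RE} for $\Eu{A} = \Eu{I}$ gives $|T| = O((\sqrt{k}/\eps)\log(k/\eps))$; and Theorem~\ref{thm:highdeps-RE} for $\Eu{A} = \Eu{R}_d$ gives $|T| = O((\sqrt{k}/\eps)\log^{(3d-1)/2}(k/\eps))$. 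Substituting each $|T|$ into $\alpha$ yields the three bullets.

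The one point needing a little care is the first bullet, where the \RE coreset is produced by random sampling and is only valid with probability $1-\delta$. Here I would emphasize that the implication of Theorem~\ref{thm:RQ-CLM} is \emph{deterministic}: conditioned on $T$ being a genuine $\eps$-\RE coreset of $P$, it is unconditionally an $(\eps,\alpha)$-\RQ coreset. Hence the only source of failure is the sampling step, so the \RQ guarantee inherits exactly the same $1-\delta$ success probability with no additional union bound or loss; the other two bullets rest on deterministic discrepancy-based constructions and so hold with certainty. Since the whole argument is a substitution into an already-proved conversion theorem, I anticipate no genuine obstacle; the only precondition to check is that $\eps \in (0,1/2)$ (assumed) and that the resulting $\alpha$ stays below $1/2$, which holds in the regime of interest where $|T|$ is large enough to make the square-root term smaller than $1/2-\eps$ — and outside that regime the bound on $\alpha$ is vacuous, so nothing is lost.
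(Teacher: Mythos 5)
Your proposal is correct and matches the paper's own (implicit) argument exactly: the paper derives this corollary by setting $\eps' = \eps$ in Theorem~\ref{thm:RQ-CLM} and plugging in the coreset sizes from Theorems~\ref{thm:RE-samp}, \ref{thm:1deps-RE}, and~\ref{thm:highdeps-RE}. Your additional remarks on the deterministic nature of the conversion and the inheritance of the $1-\delta$ probability in the sampling case are accurate and consistent with the paper.
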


Finally we discuss why the $\alpha$ term in the $(\eps',\alpha)$-\RQ coreset $T$ is needed.  
Recall from Section \ref{sec:rec} that approximating the value of $\E_{Q \Subset P} \big[ \frac{|Q \cap r|}{|Q|} \big]$  with $\E_{S \Subset T} \big[ \frac{|S \cap r|}{|S|} \big]$ for all $r$ corresponds to a low-discrepancy sample of $P_{\cert}$. Discrepancy error immediately implies we will have at least the $\eps$ horizontal shift between the two distributions and their means, unless we could obtain a zero discrepancy sample of $P_{\cert}$.  Note this $\eps$-horizontal error corresponds to the $\alpha$ term in an $(\eps',\alpha)$-\RQ coreset.  
When $P$ is very large, then due to the central limit theorem, $F_{P,r}$ will grow very sharply around $\E_{Q \Subset P} \big[ \frac{|Q \cap r|}{|Q|} \big]$.  
In the worst case $F_{T,r}$ may be $\Omega(1)$ vertically away from $F_{P,r}$ on either side of $\E_{S \Subset T} \big[ \frac{|S \cap r|}{|S|} \big]$, so no reasonable amount of $\eps'$ vertical tolerance will make up for this gap.  

On the other hand, the $\eps'$ vertical component is necessary since for very small probability events (that is for a fixed range $r$ and small threshold $\tau$) on $P$, we may need a much smaller value of $\tau$ (smaller by $\Omega(1)$) to get the same probability on $T$, requiring a very large horizontal shift.  But since it is a very small probability event, only a small vertical $\eps'$ shift is required.  

The main result of this section then is showing that there exist pairs $(\eps',\alpha)$ which are both small.

\section{Conclusion and Open Questions}
This paper defines and provides the first results for coresets on uncertain data.  These can be essential tools for monitoring a subset of a large noisy data set, as a way to approximately monitor the full uncertainty.  

There are many future directions on this topic, in addition to tightening the provided bounds especially for other range spaces.  Can we remove the dependence on $k$ without random sampling?   Can coresets be constructed over uncertain data for other queries such as minimum enclosing ball, clustering, and extents?  

 \bibliography{uncertain}

\begin{thebibliography}{10}

\bibitem{ACTY09}
{\sc Agarwal, P.~K., Cheng, S.-W., Tao, Y., and Yi, K.}
\newblock Indexing uncertain data.
\newblock In {\em PODS\/} (2009).

\bibitem{AHV07}
{\sc Agarwal, P.~K., Har-Peled, S., and Varadarajan, K.}
\newblock Geometric approximations via coresets.
\newblock {\em Current Trends in Combinatorial and Computational Geometry (E.
  Welzl, ed.)\/} (2007).

\bibitem{AHV04}
{\sc Agarwal, P.~K., Har-Peled, S., and Varadarajan, K.~R.}
\newblock Approximating extent measure of points.
\newblock {\em Journal of ACM 51}, 4 (2004), 2004.

\bibitem{ABSHNSW06}
{\sc Agrawal, P., Benjelloun, O., Sarma, A.~D., Hayworth, C., Nabar, S.,
  Sugihara, T., and Widom, J.}
\newblock Trio: A system for data, uncertainty, and lineage.
\newblock In {\em PODS\/} (2006).

\bibitem{AB99}
{\sc Anthony, M., and Bartlett, P.~L.}
\newblock {\em Neural Network Learning: Theoretical Foundations}.
\newblock Cambridge University Press, 1999.

\bibitem{BHP02}
{\sc B\={a}doiu, M., Har-Peled, S., and Indyk, P.}
\newblock Approximate clustering via core-sets.
\newblock In {\em STOC\/} (2002).

\bibitem{bs-ads-04}
{\sc Bandyopadhyay, D., and Snoeyink, J.}
\newblock Almost-{D}elaunay simplices: Nearest neighbor relations for imprecise
  points.
\newblock In {\em SODA\/} (2004).

\bibitem{Bec81a}
{\sc Beck, J.}
\newblock Roth's estimate of the discrepancy of integer sequences is nearly
  sharp.
\newblock {\em Combinatorica 1\/} (1981), 319--325.

\bibitem{bohus}
{\sc Bohus, G.}
\newblock On the discrepancy of 3 permutations.
\newblock {\em Random Structures and Algorithms 1}, 2 (1990), 215--220.

\bibitem{BC03}
{\sc B\u{a}doiu, M., and Clarkson, K.}
\newblock Smaller core-sets for balls.
\newblock In {\em SODA\/} (2003).

\bibitem{BDJRV05}
{\sc Burdick, D., Deshpande, P.~M., Jayram, T., Ramakrishnan, R., and
  Vaithyanathan, S.}
\newblock {OLAP} over uncertain and imprecise data.
\newblock In {\em VLDB\/} (2005).

\bibitem{Cha01}
{\sc Chazelle, B.}
\newblock {\em The Discrepancy Method}.
\newblock Cambridge, 2000.

\bibitem{CM96}
{\sc Chazelle, B., and Matousek, J.}
\newblock On linear-time deterministic algorithms for optimization problems in
  fixed dimensions.
\newblock {\em Journal of Algorithms 21\/} (1996), 579--597.

\bibitem{CW89}
{\sc Chazelle, B., and Welzl, E.}
\newblock Quasi-optimal range searching in spaces of finite {VC}-dimension.
\newblock {\em Discrete and Computational Geometry 4\/} (1989), 467--489.

\bibitem{threshquery}
{\sc Cheng, R., Xia, Y., Prabhakar, S., Shah, R., and Vitter, J.~S.}
\newblock Efficient indexing methods for probabilistic threshold queries over
  uncertain data.
\newblock In {\em VLDB\/} (2004).

\bibitem{CG09}
{\sc Cormode, G., and Garafalakis, M.}
\newblock Histograms and wavelets of probabilitic data.
\newblock In {\em ICDE\/} (2009).

\bibitem{CLY09}
{\sc Cormode, G., Li, F., and Yi, K.}
\newblock Semantics of ranking queries for probabilistic data and expected
  ranks.
\newblock In {\em ICDE\/} (2009).

\bibitem{efficientquery}
{\sc Dalvi, N., and Suciu, D.}
\newblock Efficient query evaluation on probabilistic databases.
\newblock In {\em VLDB\/} (2004).

\bibitem{gss-egbra-89}
{\sc Guibas, L.~J., Salesin, D., and Stolfi, J.}
\newblock Epsilon geometry: building robust algorithms from imprecise
  computations.
\newblock In {\em SoCG\/} (1989).

\bibitem{gss-cscah-93}
{\sc Guibas, L.~J., Salesin, D., and Stolfi, J.}
\newblock Constructing strongly convex approximate hulls with inaccurate
  primitives.
\newblock {\em Algorithmica 9\/} (1993), 534--560.

\bibitem{peled}
{\sc Har-Peled, S.}
\newblock {\em Geometric Approximation Algorithms}.
\newblock American Mathematical Society, 2011.

\bibitem{hm-ticpps-08}
{\sc Held, M., and Mitchell, J. S.~B.}
\newblock Triangulating input-constrained planar point sets.
\newblock {\em Information Processing Letters 109}, 1 (2008).

\bibitem{JKV07}
{\sc Jayram, T., Kale, S., and Vee, E.}
\newblock Efficient aggregation algorithms for probabilistic data.
\newblock In {\em SODA\/} (2007).

\bibitem{JMMV07}
{\sc Jayram, T., McGregor, A., Muthukrishnan, S., and Vee, E.}
\newblock Estimating statistical aggregates on probabilistic data streams.
\newblock In {\em PODS\/} (2007).

\bibitem{JLP12}
{\sc J{\o}rgensen, A.~G., L\"offler, M., and Phillips, J.~M.}
\newblock Geometric computation on indecisive and uncertain points.
\newblock arXiv:1205.0273.

\bibitem{JLP11}
{\sc J{\o}rgensen, A.~G., L\"offler, M., and Phillips, J.~M.}
\newblock Geometric computation on indecisive points.
\newblock In {\em WADS\/} (2011).

\bibitem{KCS11b}
{\sc Kamousi, P., Chan, T.~M., and Suri, S.}
\newblock The stochastic closest pair problem and nearest neighbor search.
\newblock In {\em WADS\/} (2011).

\bibitem{KCS11a}
{\sc Kamousi, P., Chan, T.~M., and Suri, S.}
\newblock Stochastic minimum spanning trees in euclidean spaces.
\newblock In {\em SOCG\/} (2011).

\bibitem{colors}
{\sc Kaplan, H., Rubin, N., Sharir, M., and Verbin, E.}
\newblock Counting colors in boxes.
\newblock In {\em SODA\/} (2007).

\bibitem{k-bmips-08}
{\sc Kruger, H.}
\newblock Basic measures for imprecise point sets in {$\mathbb{R}^d$}.
\newblock Master's thesis, Utrecht University, 2008.

\bibitem{LLS01}
{\sc Li, Y., Long, P.~M., and Srinivasan, A.}
\newblock Improved bounds on the samples complexity of learning.
\newblock {\em Journal of Computer and System Science 62\/} (2001), 516--527.

\bibitem{LP09}
{\sc L\"offler, M., and Phillips, J.}
\newblock Shape fitting on point sets with probability distributions.
\newblock In {\em ESA\/} (2009), Springer Berlin / Heidelberg.

\bibitem{ls-dtip-08}
{\sc L{\"o}ffler, M., and Snoeyink, J.}
\newblock {Delaunay} triangulations of imprecise points in linear time after
  preprocessing.
\newblock In {\em SOCG\/} (2008).

\bibitem{divide}
{\sc Matousek, J.}
\newblock Approximations and optimal geometric divide-and-conquer.
\newblock {\em Journal of Computer and System Sciences 50}, 2 (1995), 203 --
  208.

\bibitem{Mat99}
{\sc Matou\v{s}ek, J.}
\newblock {\em Geometric Discrepancy}.
\newblock Springer, 1999.

\bibitem{nt-teb-00}
{\sc Nagai, T., and Tokura, N.}
\newblock Tight error bounds of geometric problems on convex objects with
  imprecise coordinates.
\newblock In {\em Jap.\ Conf.\ on Discrete and Comput.\ Geom.\/} (2000), LNCS
  2098, pp.~252--263.

\bibitem{obj-ue-05}
{\sc Ostrovsky-Berman, Y., and Joskowicz, L.}
\newblock Uncertainty envelopes.
\newblock In {\em 21st European Workshop on Comput.\ Geom.\/} (2005),
  pp.~175--178.

\bibitem{Phi08}
{\sc Phillips, J.~M.}
\newblock Algorithms for $\eps$-approximations of terrains.
\newblock In {\em ICALP\/} (2008).

\bibitem{Phi09}
{\sc Phillips, J.~M.}
\newblock {\em Small and Stable Descriptors of Distributions for Geometric
  Statistical Problems}.
\newblock PhD thesis, Duke University, 2009.

\bibitem{1644250}
{\sc Sarma, A.~D., Benjelloun, O., Halevy, A., Nabar, S., and Widom, J.}
\newblock Representing uncertain data: models, properties, and algorithms.
\newblock {\em The VLDB Journal 18}, 5 (2009), 989--1019.

\bibitem{spencer}
{\sc Spencer, J., Srinivasan, A., and Tetai, P.}
\newblock Discrepancy of permutation families.
\newblock Unpublished manuscript, 2001.

\bibitem{TCXNKP05}
{\sc Tao, Y., Cheng, R., Xiao, X., Ngai, W.~K., Kao, B., and Prabhakar, S.}
\newblock Indexing multi-dimensional uncertain data with arbitrary probability
  density functions.
\newblock In {\em VLDB\/} (2005).

\bibitem{MDFW00}
{\sc van~der Merwe, R., Doucet, A., de~Freitas, N., and Wan, E.}
\newblock The unscented particle filter.
\newblock In {\em NIPS\/} (2000), vol.~8, pp.~351--357.

\bibitem{kl-lbbsd-10}
{\sc van Kreveld, M., and L{\"o}ffler, M.}
\newblock Largest bounding box, smallest diameter, and related problems on
  imprecise points.
\newblock {\em Computational Geometry: Theory and Applications 43\/} (2010),
  419--433.

\bibitem{VC71}
{\sc Vapnik, V., and Chervonenkis, A.}
\newblock On the uniform convergence of relative frequencies of events to their
  probabilities.
\newblock {\em Theory of Probability and its Applications 16\/} (1971),
  264--280.

\bibitem{aggregate}
{\sc Yang, S., Zhang, W., Zhang, Y., and Lin, X.}
\newblock Probabilistic threshold range aggregate query processing over
  uncertain data.
\newblock {\em Advances in Data and Web Management\/} (2009), 51--62.

\bibitem{ZLTZW12}
{\sc Zhang, Y., Lin, X., Tao, Y., Zhang, W., and Wang, H.}
\newblock Efficient computation of range aggregates against uncertain location
  based queries.
\newblock {\em IEEE Transactions on Knowledge and Data Engineering 24\/}
  (2012), 1244--1258.

\end{thebibliography}
\bibliographystyle{acm}

\appendix
\section{Low Discrepancy to $\eps$-Coreset}
\label{app:MR}

Mainly in the 90s Chazelle and Matousek~\cite{CW89,CM96,divide,Mat99,Cha01} led the development of method to convert from a low-discrepancy coloring to a coreset that allowed for approximate range queries.  Here we summarize and generalize these results.

We start by restating a results of Phillips~\cite{Phi08,Phi09} which generalizes these results, here we state it a bit more specifically for our setting.  

\begin{theorem}[Phillips~\cite{Phi08,Phi09}]
Consider a point set $P$ of size $n$ and a family of subsets $\Eu{A}$.
Assume an $O(n^\beta)$ time algorithm to construct a coloring $\chi : P \to \{-1,+1\}$ so $\disc_\chi(P,\Eu{A}) = O(\gamma \log^\omega n)$ where $\beta$, $\gamma$, and $\omega$ are constant algorithm parameters dependent on $\Eu{A}$, but not $P$ (or $n$).  
There exists an algorithm to construct an $\eps$-sample of $(P,\Eu{A})$ of size $g(\eps,\Eu{A}) = O((\gamma/\eps) \log^\omega(\gamma/\eps))$ in time 
$O(n \cdot g(\eps,\Eu{A})^{\beta-1})$.  
\label{thm:disc2samp}
\end{theorem}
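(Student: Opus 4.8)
The plan is to turn the low-discrepancy coloring into an $\eps$-sample by iterated halving, and then to organize these halvings into a merge-reduce tree to obtain the stated running time. The single primitive I would establish first is a \emph{halving step}: given a set $S$ of size $m$ and the guaranteed coloring $\chi$ with $\disc_\chi(S,\Eu{A}) = O(\gamma \log^\omega m)$, retain the points of one color class to obtain $S'$ of size $m/2$. For any $r \in \Eu{A}$, combining $|S^+ \cap r| - |S^- \cap r| = \sum_{x \in S \cap r} \chi(x)$ with $|S^+ \cap r| + |S^- \cap r| = |S \cap r|$ shows $|S' \cap r| = |S \cap r|/2 \pm \disc_\chi(S,r)/2$. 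Normalizing by $|S'| = m/2$ gives that the density $|S' \cap r|/|S'|$ differs from $|S \cap r|/|S|$ by at most $\disc_\chi(S,\Eu{A})/m = O(\gamma \log^\omega m / m)$, so one halving perturbs every normalized range count by this additive amount.

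Second, I would bound the sample size by iterating the halving from size $n$ down to a target size $s$. Since the intermediate sizes $n_i = n/2^i$ decrease geometrically while $\log^\omega n_i \le \log^\omega n$ changes only polylogarithmically, the accumulated error $\sum_i O(\gamma \log^\omega n_i / n_i)$ is a geometric series dominated by its last (smallest-set) term, giving total error $O(\gamma \log^\omega s / s)$. Requiring this to be at most $\eps$ yields the inequality $s \ge c (\gamma/\eps) \log^\omega s$, whose solution is $s = g(\eps,\Eu{A}) = O((\gamma/\eps) \log^\omega(\gamma/\eps))$; one checks by substitution that $\log^\omega s = O(\log^\omega(\gamma/\eps))$ at this value, matching the claimed size.

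Third, to achieve running time $O(n\, g^{\beta-1})$ rather than the naive $O(n^\beta)$ incurred by coloring the full set, I would arrange the halvings in a merge-reduce tree: partition $P$ into $O(n/g)$ groups of size $\Theta(g)$, and at each internal node merge two child samples of size $g$ and reduce the resulting set of size $2g$ back down to $g$ by one halving. Each reduce invokes the coloring algorithm only on a set of size $\Theta(g)$, costing $O(g^\beta)$, and there are $O(n/g)$ nodes, for total time $O((n/g) g^\beta) = O(n\, g^{\beta-1})$.

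The main obstacle is the error analysis of this tree. Naively each of its $O(\log(n/g))$ levels contributes the same additive error $O(\gamma \log^\omega g / g)$, which would introduce a spurious $\log(n/g)$ factor depending on $n$ and break the clean size bound. I would resolve this exactly as Chazelle and Matousek do, by assigning geometrically increasing precision, and hence geometrically larger sample sizes, to the higher levels of the tree, so that the per-level errors again form a convergent geometric series dominated by the finest level. This keeps the total error at $O(\gamma \log^\omega g / g) \le \eps$ while leaving the dominant time and size bounds unchanged, which completes both claims.
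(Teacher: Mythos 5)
Your proposal is correct and takes essentially the same approach as the paper: the same halving primitive (the paper's Lemma \ref{lem:1round} plus facts (S1) and (S2)), the same merge-reduce tree to obtain the $O(n\, g^{\beta-1})$ running time, and the same resolution of the spurious $\log(n/g)$ factor by letting set sizes grow geometrically up the tree so that per-level errors form a convergent geometric series dominated by the finest level. The one detail you defer to Chazelle--Matousek---the precise growth schedule---is exactly what the paper spells out, namely skipping one reduce every $\beta+3$ steps so the sets double once per epoch while the first epoch still dominates both the error and the running time.
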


Note that we ignored non-exponential dependence on $\omega$ and $\beta$ since in our setting they are data and problem independent constants.  But we are more careful with $\gamma$ terms since they depend on $k$, the number of locations of each uncertain point.  

We restate the algorithm and analysis here for completeness, using $g = g(\eps,\Eu{A})$ for shorthand.  
Divide $P$ into $n/g$ parts $\{\bar P_1, \bar P_2, \ldots, \bar P_{n/g}\}$ of size $k = 4(\beta + 2) g$.  Assume this divides evenly and $n/g$ is a power of two, otherwise pad $P$ and adjust $g$ by a constant.  
Until there is a single set, repeat the following two stages.  
In stage 1, for $\beta+2$ steps, pair up all remaining sets, and for all pairs (e.g. $P_i$ and $P_j$) construct a low-discrepancy coloring $\chi$ on $P_i \cup P_j$ and discard all points colored $-1$ (or $+1$ at random).  In the $(\beta+3)$rd step pair up all sets, but do not construct a coloring and halve.  
That is every epoch ($\beta+3$ steps) the size of remaining sets double, otherwise they remain the same size.  
When a single set remains, stage 2 begins; it performs the color-halve part of the above procedure until $\disc(P,\Eu{A}) \leq \eps n$ as desired.  

We begin analyzing the error on a single coloring.  
\begin{lemma}
The set $P^+ = \{p \in P \mid \chi(p) = +1\}$ is an $(\disc_\chi(P,\Eu{A})/n)$-sample of $(P,\Eu{A})$.  
\label{lem:1round}
\end{lemma}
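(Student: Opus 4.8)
The plan is to reduce the $\eps$-sample condition directly to the definition of discrepancy through a single elementary counting identity, exploiting the fact that the color-halving step keeps the surviving set balanced, so that $|P^+| = n/2$ (the same balance hypothesis used in Lemma~\ref{lem:REcolor}).

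First I would fix an arbitrary range $r \in \Eu{A}$ and partition the points of $P$ lying in $r$ according to their color: let $a = |P^+ \cap r|$ count those colored $+1$ and $b = |P^- \cap r|$ count those colored $-1$, where $P^- = P \setminus P^+$. Then $|P \cap r| = a + b$, while by the definition of discrepancy $\disc_\chi(P,r) = |\sum_{x \in P \cap r} \chi(x)| = |a - b|$. This rewriting is the whole content of the argument: it expresses both the count $|P \cap r|$ and the signed count that defines the discrepancy in terms of the same two quantities $a$ and $b$.

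Next I would compute the sampling error for this single range. Using $|P^+| = n/2$ and $|P| = n$,
\[
\left| \frac{|P \cap r|}{|P|} - \frac{|P^+ \cap r|}{|P^+|} \right|
= \left| \frac{a+b}{n} - \frac{2a}{n} \right|
= \frac{|a - b|}{n}
= \frac{\disc_\chi(P,r)}{n}.
\]
Since $r$ was arbitrary, taking the maximum over $r \in \Eu{A}$ turns the right-hand side into $\disc_\chi(P,\Eu{A})/n$ by the definition of $\disc_\chi(P,\Eu{A})$, which is exactly the assertion that $P^+$ is a $(\disc_\chi(P,\Eu{A})/n)$-sample of $(P,\Eu{A})$.

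The only delicate point, and the one I would flag, is the balance assumption $|P^+| = n/2$: the clean equality above relies on $P^+$ and $P^-$ having equal size, since otherwise the normalizing denominator $|P^+|$ no longer cancels so neatly. This balance is precisely what the color-halving procedure guarantees, as each merge-reduce step pairs sets and discards one color class; I would therefore state it as a standing assumption inherited from the construction, or, to allow an off-by-one imbalance when $n$ is odd, absorb the resulting distortion into a $(1+o(1))$ factor. No deeper obstacle arises, as the lemma is essentially a normalization of the discrepancy definition by $n$.
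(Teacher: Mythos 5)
Your proof is correct and takes essentially the same route as the paper: the paper's one-line computation rewrites $\frac{|P\cap R|}{|P|} - \frac{|P^+\cap R|}{|P^+|}$ as $\frac{|P\cap R| - 2|P^+\cap R|}{n}$, which is exactly your $\frac{|a-b|}{n}$, and likewise relies implicitly on the balance $|P^+| = n/2$ guaranteed by the halving step. Your explicit flagging of that balance assumption is a reasonable addition but not a different argument.
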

\begin{proof}
\begin{align*}
\max_{R \in \Eu{A}} \left| \frac{|P \cap R|}{|P|} - \frac{|P^+ \cap R|}{|P^+|} \right| 
= 
\max_{R \in \Eu{A}} \left| \frac{|P \cap R| - 2 |P^+ \cap R|}{n} \right| 
 \leq
\frac{\disc_\chi(P,\Eu{A})}{n}. 
\end{align*}
\end{proof}
We also note two simple facts~\cite{Cha01,Mat99}:
\begin{itemize}
\item[(S1)] If $Q_1$ is an $\eps$-sample of $P_1$ and $Q_2$ is an $\eps$-sample of $P_2$, then $Q_1 \cup Q_2$ is an $\eps$-sample of $P_1 \cup P_2$.  
\item[(S2)] If $Q$ is an $\eps_1$-sample of $P$ and $S$ is an $\eps_2$ sample of $Q$, then $S$ is an $(\eps_1+\eps_2)$-sample of $P$.  
\end{itemize}
Note that (S1) (along with Lemma \ref{lem:1round}) implies the arbitrarily decomposing $P$ into $n/g$ sets and constructing colorings of each achieves the same error bound as doing so on just one.  And (S2) implies that chaining together rounds adds the error in each round.  
It follows that if we ignore the $(\beta+3)$rd step in each epoch, then there is $1$ set remaining after $\log (n/g)$ steps.  The error caused by each step is $\disc(g,\Eu{A})/g$ so the total error is $\log (n/g) (\gamma \log^\omega g)/g = \eps$.  Solving for $g$ yields
$g = O(\frac{\gamma}{\eps} \log (\frac{n \eps}{\gamma}) \log^\omega(\frac{\gamma}{\eps}))$.  

Thus to achieve the result stated in the theorem the $(\beta+3)$rd step skip of a reduce needs to remove the $\log(n\eps/\gamma)$ term from the error.  This works!  After $\beta+3$ steps, the size of each set is $2g$ and the discrepancy error is $\gamma \log^\omega(2g) / 2g$.  This is just more than half of what it was before, so the total error is now:
\[\sum_{i=0}^{\frac{\log(n/g)}{\beta+3}} (\beta+3) \gamma \log^\omega(2^i g) / (2^i g) = \Theta(\beta (\gamma \log^\omega g)/g) = \eps. \]  Solving for $g$ yields $g = O(\frac{\beta \gamma}{\eps} \log^\omega(1/\eps))$ as desired.  
Stage 2 can be shown not to asymptotically increase the error.  

To achieve the runtime we again start with the form of the algorithm without the halve-skip on every $(\beta+3)$rd step.  Then the first step takes $O((n/g) \cdot g^{\beta})$ time.  And each $i$th step takes $O((n/2^{i-1}) g^{\beta-1})$ time.  Since each subsequent step takes half as much time, the runtime is dominated by the first $O(n g^{\beta-1})$ time step.  

For the full algorithm, the first epoch ($\beta+3$ steps, including a skipped halve) takes $O(n g^{\beta-1})$ time, and the $i$th epoch takes $O(n/2^{(\beta+2)i} (g 2^i)^{\beta-1}) = O(n g^{\beta-1} / 2^{3i})$ time.  Thus the time is still dominated by the first epoch.  Again, stage 2 can be shown not to affect this runtime, and the total runtime bound is achieved as desired, and completes the proof.  

Finally, we state a useful corollary about the expected error being $0$.  This holds specifically when we choose to discard the set $P^+$ or $P^- = \{p \in P \mid \chi(p) = -1\}$ at random on each halving.  

\begin{corollary}
\label{cor:E0}
The expected error for any range $R \in \Eu{A}$ on the $\eps$-sample $T$ created by Theorem \ref{thm:disc2samp} is 
\[
\E \left[ \frac{|R \cap P|}{|P|} - \frac{|T \cap R|}{|T|} \right]  = 0.
\]  
\end{corollary}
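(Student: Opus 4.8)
The plan is to reduce the ratio statement first to a statement about absolute counts, and then to a uniform survival probability for each input point. Since every coloring used is balanced (so each color--halve step exactly halves the set it is applied to, and every union exactly doubles it), the size of each set at a given level of the merge--reduce tree, and in particular the final size $t = |T|$, is a deterministic quantity; so are $n = |P|$ and the count $|P \cap R|$ for a fixed $R$. Thus it suffices to prove $\E[|T \cap R|] = (t/n)\,|P \cap R|$, after which dividing by the deterministic $t$ and comparing with $|P \cap R|/n$ yields the claimed zero expected error. By linearity of expectation, $\E[|T \cap R|] = \sum_{p \in P \cap R} \Pr[p \in T]$, so the entire corollary follows once I show that every point $p \in P$ survives to $T$ with the same probability, namely $\Pr[p \in T] = t/n$.

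First I would analyze a single color--halve step in isolation. The coloring $\chi$ on the current set $S$ is deterministic given $S$, and we independently flip a fair coin to retain either $S^+$ or $S^-$; hence a fixed $p \in S$ survives exactly when its (fixed) color class is the retained one, which happens with probability $1/2$ regardless of the value of $\chi(p)$ and of the history that produced $S$. A union step discards nothing, so a point survives it with probability $1$. Because the schedule of operations in Theorem~\ref{thm:disc2samp} is fixed and data--independent, every leaf--to--root path in the tree passes through the same number $\ell$ of color--halve steps (those of Stage~1 together with those of Stage~2). Conditioning successively on survival to each such step, and using that the retain--coin at each step is fresh and independent of the past, the per--step survival probabilities multiply, giving $\Pr[p \in T] = 2^{-\ell}$ for every $p$, uniformly.

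It then remains only to identify $2^{-\ell}$ with $t/n$. Since $\E[|T|] = \sum_{p \in P} \Pr[p \in T] = n\,2^{-\ell}$ while $|T| = t$ is deterministic, we must have $t = n\,2^{-\ell}$, i.e.\ $2^{-\ell} = t/n$. Substituting into $\E[|T \cap R|] = |P \cap R|\cdot 2^{-\ell}$ gives $\E[|T \cap R|] = (t/n)\,|P \cap R|$, and therefore $\E\big[\tfrac{|P \cap R|}{|P|} - \tfrac{|T \cap R|}{|T|}\big] = \tfrac{|P \cap R|}{n} - \tfrac{1}{t}\cdot\tfrac{t}{n}\,|P \cap R| = 0$.

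The main obstacle I anticipate is justifying that the denominator $|T|$ may be treated as a constant: the clean cancellation $\E[|T\cap R|/|T|] = \E[|T\cap R|]/|T|$ relies on every halving being exactly balanced, so that sizes never become random. I would therefore make the balanced--coloring assumption explicit (it is standard, and any discrepancy coloring can be balanced with a negligible adjustment). I would also remark that, absent this assumption, one instead argues by a martingale that tracks the ratio $|X \cap R|/|X|$ across steps---using that a union of equal--size sets averages their ratios and that a balanced color--halve preserves the ratio in conditional expectation---rather than by the cleaner survival--probability counting used above.
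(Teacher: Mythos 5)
Your proof is correct and rests on the same idea the paper uses (and essentially only sketches): because the surviving color class at each halving is chosen by a fair coin, every step is unbiased for any fixed range $R$, and iterating gives zero expected error for the final $\eps$-sample $T$. Your per-point survival-probability accounting ($\Pr[p\in T]=2^{-\ell}=|T|/|P|$) is a clean formalization of this, and you rightly make explicit the balanced-coloring assumption needed so that $|T|$ is deterministic and can be pulled out of the expectation --- a point the paper glosses over with ``roughly halves.''
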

Note that there is no absolute value taken inside $\E[\cdot]$, so technically this measures the expected undercount.

\Paragraph{\RE-discrepancy}
We are also interested in achieving these same results for \RE-discrepancy.  To this end, the algorithms are identical.  Lemma \ref{lem:REcolor} replaces Lemma \ref{lem:1round}.  (S1) and (S2) still hold.  Nothing else about the analysis depends on properties of $\disc$ or \RE-$\disc$, so Theorem \ref{thm:disc2samp} can be restated for \RE-discrepancy.

\begin{theorem}
Consider an uncertain point set $P$ of size $n$ and a family of subsets $\Eu{A}$ of $P_\cert$.
Assume an $O(n^\beta)$ time algorithm to construct a coloring $\chi : P \to \{-1,+1\}$ so \RE-$\disc_\chi(P,\Eu{A}) = O(\gamma \log^\omega n)$ where $\beta$, $\gamma$, and $\omega$ are constant algorithm parameters dependent on $\Eu{A}$, but not $P$ (or $n$).  
There exists an algorithm to construct an $\eps$-\RE coreset of $(P,\Eu{A})$ of size $g(\eps,\Eu{A}) = O((\gamma/\eps) \log^\omega(\gamma/\eps))$ in time 
$O(n \cdot g(\eps,\Eu{A})^{\beta-1})$.  
\label{thm:RE-disc2samp}
\end{theorem}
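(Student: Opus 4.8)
The plan is to observe that the merge-reduce machinery behind Theorem~\ref{thm:disc2samp} is entirely abstract: its size and runtime accounting never invokes any specific property of $\disc$ beyond three ingredients, namely (i) a single-round error bound, (ii) the union fact (S1), and (iii) the composition fact (S2). So I would run the \emph{identical} algorithm of Theorem~\ref{thm:disc2samp}, now operating on the uncertain points of $P$ so that each coloring $\chi$ assigns a single color to each $p_i$ and hence extends consistently to $\chi_\cert$ on $P_\cert$, and then check that each of these three ingredients has a faithful $\eps$-\RE coreset analogue.

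First I would replace the single-round Lemma~\ref{lem:1round} by Lemma~\ref{lem:REcolor}. The latter already states that a balanced coloring $\chi$ with $|P_\chi^+| = n/2$ yields an $\eps$-\RE coreset $P_\chi^+$ with $\eps = \REd_\chi(P,\Eu{A})/n$, which is exactly the per-round error $\disc_\chi(P,\Eu{A})/n$ used in the original proof, with $\disc$ swapped for $\REd$. The hypothesis that an $O(n^\beta)$-time algorithm produces a coloring with $\REd_\chi(P,\Eu{A}) = O(\gamma \log^\omega n)$ is precisely what drives each reduce step.

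Next I would confirm that (S1) and (S2) survive for $\eps$-\RE coresets. The cleanest route is through Lemma~\ref{lem:certtocomp}, which identifies ``$T$ is an $\eps$-\RE coreset of $P$'' with ``$T_\cert$ is an $\eps$-sample of $P_\cert$.'' Since forming the certificate set commutes with disjoint union, $(Q_1 \cup Q_2)_\cert = (Q_1)_\cert \cup (Q_2)_\cert$, the fact (S1) for $\eps$-\RE coresets follows by applying the known $\eps$-sample version of (S1) on the $P_\cert$ side and translating back. Likewise, if $Q$ is an $\eps_1$-\RE coreset of $P$ and $S$ is an $\eps_2$-\RE coreset of $Q$, then passing to $Q_\cert$ and $S_\cert$ turns this into a chain of an $\eps_1$-sample and an $\eps_2$-sample, so (S2) for samples gives that $S_\cert$ is an $(\eps_1+\eps_2)$-sample of $P_\cert$, i.e. $S$ is an $(\eps_1+\eps_2)$-\RE coreset of $P$. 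With these three pieces in hand, the epoch-based size analysis of Theorem~\ref{thm:disc2samp}—skipping one halving every $\beta+3$ steps to strip the $\log(n\eps/\gamma)$ factor—and the geometric runtime sum carry over word-for-word, delivering size $g(\eps,\Eu{A}) = O((\gamma/\eps)\log^\omega(\gamma/\eps))$ in time $O(n \cdot g(\eps,\Eu{A})^{\beta-1})$.

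I expect the only genuinely delicate point to be the consistency constraint: unlike an arbitrary coloring of $P_\cert$, the merge-reduce coloring must assign a single color to all locations of each $p_i$. But this constraint is already built into the definition of $\REd$ through $\chi_\cert$, and into Lemma~\ref{lem:REcolor}, so it introduces no new obstacle; it merely means the hypothesized low-\RE-discrepancy coloring is the object doing the work in place of a generic low-$\disc$ coloring. Everything else is a faithful re-run of the proof of Theorem~\ref{thm:disc2samp}.
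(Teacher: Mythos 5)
Your proposal is correct and follows essentially the same route as the paper: the paper's argument is precisely to rerun the merge-reduce algorithm of Theorem~\ref{thm:disc2samp} verbatim, substitute Lemma~\ref{lem:REcolor} for Lemma~\ref{lem:1round} as the per-round error bound, and observe that (S1) and (S2) still hold so the size and runtime accounting is unchanged. Your extra step of justifying (S1) and (S2) for $\eps$-\RE coresets via the $P_\cert$ correspondence of Lemma~\ref{lem:certtocomp} is a detail the paper asserts without proof, but it is the intended reasoning.
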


\end{document}